\documentclass[a4paper]{elsarticle}
\usepackage[left=2.5cm,right=2.5cm]{geometry}

\usepackage[T1]{fontenc}

\usepackage{amssymb,amsmath,amsthm}
\usepackage{mleftright}
\usepackage{enumitem}

\usepackage{hyperref}
\usepackage{orcidlink}
\usepackage{url}
\hypersetup{colorlinks}

\usepackage{graphicx}

\usepackage[capitalize]{cleveref}
\usepackage{listings}

\usepackage{tkz-graph}
\usetikzlibrary{arrows}
\usetikzlibrary{arrows.meta}
\usepackage{diagbox}
\usepackage{accents}

\newtheorem{definition}{Definition}
\newtheorem{theorem}{Theorem}
\newtheorem{lemma}{Lemma}
\newtheorem{corollary}{Corollary}
\newtheorem{proposition}{Proposition}
\newtheorem{observation}{Observation}
\newtheorem{example}{Example}
\newtheorem{remark}{Remark}

\crefname{observation}{Observation}{Observations}
\Crefname{observation}{Observation}{Observations}
\crefname{proposition}{Proposition}{Propositions}
\Crefname{proposition}{Proposition}{Propositions}
\crefname{corollary}{Corollary}{Corollaries}
\Crefname{corollary}{Corollary}{Corollaries}
\crefname{figure}{Figure}{Figures}
\Crefname{figure}{Figure}{Figures}

\newcommand{\URSsymbol}{\mathsf{URS}}
\newcommand{\kURS}[1]{k_{\URSsymbol}\mleft(#1\mright)}
\newcommand{\URSmostgeneral}[3]{\URSsymbol\mleft(#1,#2,#3\mright)}
\newcommand{\URS}[2]{\URSsymbol{}(#1,#2)}
\newcommand{\URSreduced}[2]{\URSsymbol{}_{\mathrm{red}}\mleft(#1,#2\mright)}
\newcommand{\matrixrowsGroupInduced}[1]{\textsf{Mat}[#1]}

\newcommand{\range}[1]{[#1]}
\newcommand{\sfact}[1]{\mleft(#1\mright)!!}
\newcommand{\sfactexcludepars}[1]{#1!!}

\newcommand{\bif}[2]{F^{(#1,#2)}}

\newcommand{\bifAcomplete}[5]{F^{\mleft(#1,#2\mright)}_{[#3,#4]}\mleft(#5\mright)}
\newcommand{\bifAwithoutsymbs}[3]{F^{\mleft(#1,#2\mright)}\mleft(#3\mright)}
\newcommand{\bifMatfreeSymfree}[2]{F^{\mleft(#1,#2\mright)}}
\newcommand{\bifMatfree}[4]{F^{\mleft(#1,#2\mright)}_{[#3,#4]}}

\newcommand{\colfreqvecSymbol}{\hspace{-0.4em}\rotatebox[origin=t]{-90}{\(\vec{}\)}f}
\newcommand{\colfreqvec}[1]{\colfreqvecSymbol{}^{\mleft(#1\mright)}}
\newcommand{\colfreq}[2]{\colfreqvecSymbol{}^{(#1)}_{#2}}

\newcommand{\rowfreqvecSymbol}{\vec{~}f}
\newcommand{\rowfreqvec}[1]{\rowfreqvecSymbol{}^{\mleft(#1\mright)}}
\newcommand{\rowfreq}[2]{\rowfreqvecSymbol{}^{\mleft(#1\mright)}_{#2}}

\newcommand{\card}[1]{\mleft|#1\mright|}
\newcommand{\transpose}[1]{{#1}\!^\top}
\newcommand{\identity}{\mathrm{id}}

\newcommand{\Aut}[1]{\operatorname{Aut}\mleft(#1\mright)}
\newcommand{\Dih}[1]{\operatorname{Dih}\mleft(#1\mright)}

\newcommand{\OrbDblArg}[2]{\operatorname{Orb}\mleft(#1,#2\mright)}
\newcommand{\ord}[1]{\operatorname{ord}\mleft(#1\mright)}

\newcommand{\GL}[2]{\operatorname{GL}\mleft(#1,#2\mright)}

\renewcommand{\tilde}[1]{\accentset{\vspace{-0.06em}\sim}{#1}}

\newcommand{\symmetricgroup}[1]{S_{#1}}

\makeatletter
\def\ps@pprintTitle{
	\let\@oddhead\@empty
	\let\@evenhead\@empty
	\let\@oddfoot\@empty
	\let\@evenfoot\@oddfoot
}
\makeatother

\hypersetup{
	pdftitle = {Pairwise Reflection Symmetry in Generalized Latin Rectangles},
	pdfauthor = {Enrico Iurlano and Günther R. Raidl},
	pdfkeywords = {Generalized Latin rectangles, balanced combinatorial design, design of experiments, computational enumeration},
	pdfcreator = {LaTeX with hyperref},
	pdfproducer = {pdfLaTeX},
	pdfcreationdate = {D:20260626000000+01'00'},
	pdfmoddate = {D:20260626000000+01'00'}
}

\AtBeginDocument{}

\begin{document}

\begin{frontmatter}
	\title{Pairwise Reflection Symmetry in Generalized Latin Rectangles}
	\cortext[cor1]{Corresponding author}
	\author{Enrico Iurlano\,\orcidlink{0000-0001-7528-0834}\,\corref{cor1}}
	\author{Günther R.~Raidl\,\orcidlink{0000-0002-3293-177X}\,}
	\affiliation{organization = {Algorithms and Complexity Group, TU Wien},
		country={Austria}
	}
	\begin{abstract}
		Many combinatorial designs ask for equal distribution of given symbols across the entries of a matrix. The paramount examples are Latin squares, where each symbol from $\{1,\dots,n\}$ appears once per row and column of an $n\times n$ matrix. Generalized Latin rectangles extend this to $\lambda n \times n$ matrices with repeated symbols under controlled column frequencies.
		In this more general setting, we examine structural properties of pairwise reflection-symmetry, which requires that, on every pair of columns, each ordered symbol pair $(p,q)$ occurs as often as its reversal $(q,p)$.
		This order-balance is precisely what makes head-to-head comparisons unbiased, i.e., no symbol gains a systematic advantage from the position it occupies relative to another, a fairness demand arising for instance when scheduling tournaments or laying out comparative trials.
		Existence of such objects for odd $\lambda$ turns out to be remarkably more subtle than for even $\lambda$.
		After showing that existence holds also for sufficiently large odd $\lambda$, we initiate the search for the smallest possible value of $\lambda$ in this setting.
		We obtain the insight that a column multiplicity of $\lambda=1$ can be achieved if and only if $n$ is a power of two.
		We complement the existence results with a direct product construction and add several further observations on the property.
		Finally, we propose and evaluate a quadratically constrained integer program to computationally search for these objects.
		The resulting experiments reveal that many of them possess an underlying group-theoretic structure which, as we conjecture, may even be unavoidable in certain settings.
	\end{abstract}
	\begin{keyword}
		Generalized Latin rectangles \sep balanced combinatorial design \sep design of experiments \sep computational enumeration
		\MSC[2020] 05B15 \sep 20B35 \sep 68R05 \sep 62K05
	\end{keyword}
\end{frontmatter}

\section{Introduction}\label{sec:introduction}
Beyond their combinatorial interest, Latin squares appear as building blocks in cryptographic primitives~\cite{keedwell2015latin}, statistical experimental design~\cite{williams1949experimental}, randomness extraction~\cite{markovski2005unbiased}, and a vast number of other fields~\cite{keedwell2015latin}.
We study the recently introduced property of \emph{pairwise reflection-symmetry}~\cite{iurlano2023growth} in generalized Latin rectangles, i.e., matrices whose entries follow a uniform column and row distribution.
To convey the kind of symmetry (or fairness) to be studied, picture $n$ figure skaters competing in a series of trials, each trial assigning a starting order~$1,2,\ldots,n$ to the participants.
Suppose every skater is required to occupy every starting position exactly $\lambda$ times, which is already a uniformity condition on the column frequencies.
Crucially, as a trial unfolds the jury becomes progressively fatigued and increasingly subject to serial-position effects:
early performances tend to be under-scored, later ones are anchored against the earlier ones, and the awarded marks therefore depend substantially on a skater's starting slot relative to the others~\cite{bruinedebruin2005save}.
In particular, a trial in which $p$ skates in position~$j_1$ and $q$ in position~$j_2$ is not interchangeable with one in which the two roles are swapped.
To make the cumulative head-to-head comparison of any two skaters $p$ and $q$ unbiased, one additionally insists that, for every pair of distinct positions~$j_1,j_2$, the number of trials in which $p$ skates from position~$j_1$ and $q$ from position~$j_2$ matches the number of trials in which the two roles are swapped.
Arranging the trials as the rows of a matrix~$A$ with entry~$A_{ij}$ recording the skater in position~$j$ of trial $i$, we can regard everything as an abstract matrix property.

A related idea is classical in statistical experimental design under the heading of \emph{residual} (or, in modern terminology, \emph{carry-over}) \emph{effects}:
In cross-over clinical and agricultural trials the residual effect of an earlier treatment may bias a later measurement, and one seeks designs in which every ordered pair of treatments appears in immediately consecutive positions equally often; these are the so-called \emph{Williams designs}~\cite{williams1949experimental}.
Pairwise reflection-symmetry is conceptually related, yet different in two respects:
It requires a certain balance not only between adjacent positions but between \emph{every} pair of columns, each of which must realize the symmetry in isolation; on the other hand, it alters the Latin square property that each column is a permutation by requiring $\lambda$-fold repetition of each symbol per column.
A parallel motivation for studying highly regular permutation collections comes from \emph{minwise independent families}~\cite{broder2000minwise} used in the prominent MinHash algorithm for rapid similarity estimation of documents in large datasets.
Since exact minwise independence requires impractically large families, one relaxes it to $k$-restricted (limited-independence) and other approximate variants:
the limited-independence relaxation and the minimal sample size realizing it are studied by Itoh et al.~\cite{itoh2000permutations, itoh2003onthesamplesize}, a nearly linear-size explicit $4$-minwise family from finite geometries is constructed by Tarui et al.~\cite{tarui2003nearly}, and group-theoretic properties of such families are exploited by Bargachev~\cite{bargachev2006some}.
A closely related object is the one of \emph{perfect sequence covering arrays}~\cite{gentle2023perfect,yuster2020perfect}, for which group-based (decomposability) assumptions have proven particularly fruitful for constructions~\cite{na2023groupbased}, just as for the more specific objects of directed $t$-packings and directed Steiner systems studied in earlier work~\cite{mathon1999directed}.

In fact pairwise reflection-symmetric matrices have been introduced because a strong relation to the latter objects is suspected and validated for some parameters of tractable size~\cite{iurlano2022scrambling}.
Given the vast number of concurring symmetry requirements we initiate a systematic study of the multiplicities $\lambda$ which enable existence of pairwise reflection-symmetric Latin rectangles.

\medskip

The paper is organized as follows.
\cref{sec:preliminaries} states the notation and recalls relevant concepts.
\cref{sec:constructions} contains our main existence and structural results, as well as the computer-aided existence study.
\cref{sec:conclusion} discusses emerging open questions.

\section{Preliminaries}\label{sec:preliminaries}
Let us start with the notation we use.
Throughout the paper, $n$, $\lambda$, $\mu$, $L$, and $M$ denote positive integers; we write $\mathbb{N}:=\{1,2,\ldots\}$ and abbreviate the integer interval $\range{n}:=\{1,2,\ldots,n\}$.
Matrices are typeset by capital letters $A,B,C,\ldots$ with the entry-indexing convention $A=(A_{ij})_{i\in\range{L},j\in\range{M}}$.
The symmetric group on $\range{n}$ is denoted $\symmetricgroup{n}$ and its identity permutation by $\identity\colon q\mapsto q$.
We will repeatedly identify a permutation~$\sigma\in\symmetricgroup{n}$ with the tuple $(\sigma(1),\sigma(2),\ldots,\sigma(n))\in\range{n}^n$, which lets us read each row of a matrix~$A\in\range{n}^{L\times n}$ as a permutation.
By $\mathbb{Z}_n$ we denote the residue class ring modulo $n$.

\bigskip

We will later encounter particular permutations in $\symmetricgroup{n}$:
An \emph{involution} $\pi\in\symmetricgroup{n}$ satisfies $\pi\circ\pi=\identity$ and a \emph{derangement} $\pi\in \symmetricgroup{n}$ is free of fixed points, i.e., $\pi(j)\neq j$ for all $j\in\range{n}$.
We recall that every permutation in $\symmetricgroup{n}$ can be decomposed into a finite number of cycles and that an involutory derangement consists exclusively of length-$2$ cycles.
A \emph{quasigroup} $(Q,\circ)$ is a set~$Q$ with a binary operation $\circ$ such that for all $a,b\in Q$ each of the equations $a\circ x=b$ and $y\circ a=b$ has a unique solution; if it is additionally associative, i.e., for all $a,b,c\in Q$, $(a\circ b)\circ c=a\circ(b\circ c)$, then it is a group.
If for all $a,b\in Q$ the commutation $a\circ b=b\circ a$ applies, we say that the (quasi)group is \emph{abelian}.
The \emph{Cayley table} of $(Q,\circ)$ is the array $T_Q\in Q^{Q\times Q}$ with $(T_Q)_{g,h}:=g\circ h$; after fixing a bijection from $Q$ onto $\range{n}$ with $n=\card{Q}$ one may view $T_Q$ as a matrix in $\range{n}^{n\times n}$.
A \emph{Latin square} of order~$n$ is a matrix~$A\in\range{n}^{n\times n}$ whose rows and columns correspond to permutations of $\range{n}$; we may also substitute $\range{n}$ by any other set of cardinality $n$.
The two notions are strongly related:
The Cayley table of a quasigroup is always a Latin square, and conversely every Latin square coincides, up to symbol relabeling, with the Cayley table of a unique quasigroup; the underlying quasigroup is a group exactly when its operation is associative.
Two such group-induced Latin squares will play an important role, firstly, the \emph{cyclic Latin square} of order~$n$, i.e., the Cayley table of $(\mathbb{Z}_n,+)$ with entries~$(g+h\bmod n)_{g,h\in\mathbb{Z}_n}$, being unique up to isotopy (row, column, and symbol permutations); and, secondly, the Cayley table of the \emph{elementary abelian $2$-group} $\mathbb{Z}_2^k$ of order~$n=2^k$; see later \cref{def:elementary-abelian-p-group}.

We are now ready to recall the generalization of Latin squares relevant to the paper.
\begin{definition}[adapted from~\cite{andersen1980generalized}]
	A matrix~$A\in\range{n}^{\lambda n \times \mu n}$ is called a \emph{generalized}%
	\footnote{Even set-valued matrix entries can be considered~\cite{andersen1980generalized}; singleton-valued entries correspond to our definition.} %
	\emph{Latin rectangle} if $(\card{\{i:A_{ij}=q\}})_{j=1}^{\mu n} = (\lambda)_{j=1}^{\mu n}$ and $(\card{\{j:A_{ij}=q\}})_{i=1}^{\lambda n}=(\mu)_{i=1}^{\lambda n}$ for all $q\in\range{n}$, i.e., the columnwise as well as the rowwise frequencies per symbol are equally distributed.
	For the special case~$\lambda=\mu=1$ we recover the notion of a classical Latin square.
\end{definition}
Latin squares can appear in \emph{reduced} form~\cite{keedwell2015latin}, i.e., the first row and first column agree with the tuple $(1,2,\ldots,n)$.
We can here straightforwardly extend this notion.
In \cref{fig:example-reflection-symmetry} such a reduced representative with $(\lambda,\mu)=(2,1)$ is depicted.
This reduced form enables the equivalent interpretation of a matrix as a multiset of permutations.
\begin{definition}\label{def:reduced-generalized-latin-rectangle}
	We call (for $\mu=1$) a generalized Latin rectangle \emph{reduced} if the first row is the tuple $(1,2,\dots,n)$ and the rows, seen as $n$-tuples, are in non-decreasing ordering with respect to the lexicographical order on $n$-tuples induced by $(\range{n},\leq)$.
\end{definition}
\begin{figure}[tb]
	\centering
	\scalebox{0.5}{
		\begin{tikzpicture}
			\definecolor{color1}{HTML}{5565ca}
			\definecolor{color2}{HTML}{c3c9ed}
			\definecolor{color3}{HTML}{f9ecf4}
			\definecolor{color4}{HTML}{f695d2}
			\definecolor{color5}{HTML}{e567af}
			\def\dt{{
						1,2,3,4,5,.,.,.,.,.,
						1,3,2,5,4,.,.,.,.,.,
						2,1,5,4,3,.,.,.,.,.,
						2,5,1,3,4,.,.,.,.,.,
						3,1,4,5,2,.,.,.,.,.,
						3,4,1,2,5,.,.,.,.,.,
						4,3,5,2,1,.,.,.,.,.,
						4,5,3,1,2,.,.,.,.,.,
						5,2,4,3,1,.,.,.,.,.,
						5,4,2,1,3,.,.,.,.,.
					}}
			\newcounter{colorval}
			\foreach \i in {0,1,...,9} {
					\foreach \j in {0,1,2,3,4} {
							\pgfmathsetcounter{colorval}{\dt[10*(10-1-\i)+\j]}
							\if \j 1
								\fill[color\thecolorval] (\j-0.5,\i-0.5) rectangle ++(1,1);
							\fi
							\if \j 4
								\fill[color\thecolorval] (\j-0.5,\i-0.5) rectangle ++(1,1);
							\fi
							\node[] at (\j,\i) {\Huge\thecolorval};
							\draw[black] (\j-0.5,\i-0.5) rectangle ++(1,1);
						}
				}
			\def\dttwo{{
						1,2,3,4,5,.,.,.,.,.,
						1,3,2,5,4,.,.,.,.,.,
						2,1,5,4,3,.,.,.,.,.,
						2,5,1,3,4,.,.,.,.,.,
						3,1,4,5,2,.,.,.,.,.,
						5,2,4,3,1,.,.,.,.,.,
						3,4,1,2,5,.,.,.,.,.,
						4,3,5,2,1,.,.,.,.,.,
						5,4,2,1,3,.,.,.,.,.,
						4,5,3,1,2,.,.,.,.,.
					}}
			\foreach \i in {0,1,...,9} {
					\foreach \j in {0,1,2,3,4} {
							\pgfmathsetcounter{colorval}{\dttwo[10*(10-1-\i)+\j]}
							\if \i 4
								\if \j 1
									\fill[color\thecolorval] (\j-0.5+7.65-0.075,\i-0.5) rectangle ++(1,1);
								\fi
								\if \j 4
									\fill[color\thecolorval] (\j-0.5+7.65-0.075,\i-0.5) rectangle ++(1,1);
								\fi
								\draw[black] (\j-0.5+7.65-0.075,\i-0.5) rectangle ++(1,1);
							\else \if \i 3
									\if \j 1
										\fill[color\thecolorval] (\j-0.5+7.65+0.075,\i-0.5) rectangle ++(1,1);
									\fi
									\if \j 4
										\fill[color\thecolorval] (\j-0.5+7.65+0.075,\i-0.5) rectangle ++(1,1);
									\fi
									\draw[black] (\j-0.5+7.65+0.075,\i-0.5) rectangle ++(1,1);
								\else \if \i 2
										\if \j 1
											\fill[color\thecolorval] (\j-0.5+7.65+0.02,\i-0.5) rectangle ++(1,1);
										\fi
										\if \j 4
											\fill[color\thecolorval] (\j-0.5+7.65+0.02,\i-0.5) rectangle ++(1,1);
										\fi
										\draw[black] (\j-0.5+7.65+0.02,\i-0.5) rectangle ++(1,1);
									\else \if \i 1
											\if \j 1
												\fill[color\thecolorval] (\j-0.5+7.65+0.0,\i-0.5) rectangle ++(1,1);
											\fi
											\if \j 4
												\fill[color\thecolorval] (\j-0.5+7.65+0.0,\i-0.5) rectangle ++(1,1);
											\fi
											\draw[black] (\j-0.5+7.65+0.0,\i-0.5) rectangle ++(1,1);
										\else \if \i 0
												\if \j 1
													\fill[color\thecolorval] (\j-0.5+7.65-0.085,\i-0.5) rectangle ++(1,1);
												\fi
												\if \j 4
													\fill[color\thecolorval] (\j-0.5+7.65-0.085,\i-0.5) rectangle ++(1,1);
												\fi
												\draw[black] (\j-0.5+7.65-0.085,\i-0.5) rectangle ++(1,1);
											\else
												\if \j 1
													\fill[color\thecolorval] (\j-0.5+7.65,\i-0.5) rectangle ++(1,1);
												\fi
												\if \j 4
													\fill[color\thecolorval] (\j-0.5+7.65,\i-0.5) rectangle ++(1,1);
												\fi
												\draw[black] (\j-0.5+7.65,\i-0.5) rectangle ++(1,1);
											\fi
										\fi
									\fi
								\fi
							\fi
							\node[] at (\j+7.65,\i) {\Huge \thecolorval};
						}
				}
			\draw[->, thick] plot [smooth] coordinates { (4.65,0) (5, 0) (6.5,1) (7,1) };
			\draw[->, thick] plot [smooth] coordinates { (4.65,4) (5, 4) (6.5,3) (7,3) };
			\draw[->, thick] plot [smooth] coordinates { (4.65,3) (5, 3) (6.5,2) (7,2) };
			\draw[->, thick] plot [smooth] coordinates { (4.65,2) (5, 2) (6.5,0) (7,0) };
			\draw[->, thick] plot [smooth] coordinates { (4.65,1) (5, 1) (6.5,4) (7,4) };
			\def\dtthree{{
						2,5,.,.,.,.,.,.,.,.,
						3,4,.,.,.,.,.,.,.,.,
						1,3,.,.,.,.,.,.,.,.,
						5,4,.,.,.,.,.,.,.,.,
						1,2,.,.,.,.,.,.,.,.,
						2,1,.,.,.,.,.,.,.,.,
						4,5,.,.,.,.,.,.,.,.,
						3,1,.,.,.,.,.,.,.,.,
						4,3,.,.,.,.,.,.,.,.,
						5,2,.,.,.,.,.,.,.,.
					}}
			\foreach \i in {0,1,...,9} {
					\foreach \j in {0,1} {
							\pgfmathsetcounter{colorval}{\dtthree[10*(10-1-\i)+\j]}
							\if \j 0
								\fill[color\thecolorval] (\j-0.5+15,\i-0.5) rectangle ++(1,1);
							\fi
							\if \j 1
								\fill[color\thecolorval] (\j-0.5+15,\i-0.5) rectangle ++(1,1);
							\fi
							\ifnum \i <5
								\node[rotate=180] at (\j+15,\i) {\Huge \thecolorval};
							\else
								\node[rotate=0] at (\j+15,\i) {\Huge \thecolorval};
							\fi
							\draw[black] (\j-0.5+15,\i-0.5) rectangle ++(1,1);
						}
				}
			\node[circle,fill=black,minimum size=1pt] at (15.5,4.5) [label=below right:{\Large ~}]{};

			\begin{scope}[xshift=4cm]
				\def\dtthree{{
							2,5,.,.,.,.,.,.,.,.,
							3,4,.,.,.,.,.,.,.,.,
							1,3,.,.,.,.,.,.,.,.,
							5,4,.,.,.,.,.,.,.,.,
							1,2,.,.,.,.,.,.,.,.,
							2,1,.,.,.,.,.,.,.,.,
							4,5,.,.,.,.,.,.,.,.,
							3,1,.,.,.,.,.,.,.,.,
							4,3,.,.,.,.,.,.,.,.,
							5,2,.,.,.,.,.,.,.,.
						}}
				\foreach \i in {0,1,...,9} {
						\foreach \j in {0,1} {
								\pgfmathsetcounter{colorval}{\dtthree[10*(10-1-\i)+\j]}
								\if \j 0
									\fill[color\thecolorval] (\j-0.5+15,\i-0.5) rectangle ++(1,1);
								\fi
								\if \j 1
									\fill[color\thecolorval] (\j-0.5+15,\i-0.5) rectangle ++(1,1);
								\fi
								\ifnum \i <5
									\node[rotate=180] at (\j+15,\i) {\Huge};
								\else
									\node[rotate=0] at (\j+15,\i) {\Huge};
								\fi
								\draw[black] (\j-0.5+15,\i-0.5) rectangle ++(1,1);
							}
					}
				\node[circle,fill=black,minimum size=1pt] at (15.5,4.5) [label=below right:{\Large ~}]{};
			\end{scope}
		\end{tikzpicture}
	}
	\caption{Pairwise reflection-symmetry for an example $A\in\range{5}^{10\times5}$ at the $10\times2$ submatrix spanned by $(j_1,j_2)=(2,5)$.
		The involution mapping each $1\times2$ row to an own flipped $1\times2$ row is visualized by the reflection through the barycenter (after the row-reordering indicated by arrows).
	}\label{fig:example-reflection-symmetry}
\end{figure}
First, let us introduce and study the concept of pairwise reflection-symmetry~\cite{iurlano2023growth} under as weak as possible assumptions; originally it was motivated even earlier in the unpublished work~\cite{iurlano2022scrambling}.
Afterwards we will mostly deal with this property under an additional assumption of uniformity, in particular implying that all symbols of the initially given alphabet $\range{n}$ must appear among the entries.
\begin{definition}\label{def:pairwise-reflection-symmetry}
	We say that $A\in \range{n}^{L\times M}$ is \emph{pairwise reflection-symmetric} if for all $(j_1,j_2)\in\range{M}^2$ with $j_1\neq j_2$ we have for any $(p,q)\in\range{n}^2$ with $p\neq q$ that
	\begin{equation*}
		\card{\{i:A_{i,j_1}=p, A_{i,j_2}=q\}}=\card{\{i:A_{i,j_1}=q, A_{i,j_2}=p\}}.
	\end{equation*}
\end{definition}
The latter definition enforces that the \emph{bivariate frequency matrix} associated to each column-pair $(j_1,j_2)$ is symmetric, i.e.,
\begin{equation*}
	\bifAwithoutsymbs{j_1}{j_2}{A}=\transpose{\left(\bifAwithoutsymbs{j_1}{j_2}{A}\right)},
\end{equation*}
with the matrix~$\bifAwithoutsymbs{j_1}{j_2}{A}=(\bifAcomplete{j_1}{j_2}{p}{q}{A})_{p,q\in\range{n}}$ made up of entries
\begin{equation*}
	\bifAcomplete{j_1}{j_2}{p}{q}{A}:=\card{\{i:A_{i,j_1}=p, A_{i,j_2}=q\}}.
\end{equation*}
A concrete example of such a matrix is displayed in \cref{fig:semifactorial-size-construction-six-three} in \cref{sec:constructions}.
When referring to these matrices, we suppress the matrix~$A$, column-indices, or symbol-pair in question when they are clear from the context.
Similarly, for $A$ let us define its $j$-th \emph{column-frequency vector} $\colfreqvec{j}=(\colfreq{j}{q})_{q\in\range{n}}$ by the entries~$\colfreq{j}{q} := \card{\{i:A_{ij}=q\}}$.
Similarly, call $\rowfreqvec{i} = (\rowfreq{i}{q})_{q\in\range{n}}$ with $\rowfreq{i}{q} :=\card{\{j:A_{ij}=q\}}$ its $i$-th \emph{row-frequency vector}.
The column-frequency vectors store therefore the marginals of the bivariate frequency matrices.
\begin{remark}
	The name ideates from the fact that for each column-pair $(j_1,j_2)$ the required reflection to be met has the geometrically meaningful association visualized in \cref{fig:example-reflection-symmetry}.
\end{remark}
\begin{example}\label{exa:bivariate-frequency-matrix}
	The bivariate frequency matrix for the example in \cref{fig:example-reflection-symmetry} at column-pair $(j_1,j_2)=(2,5)$ is given by the following:
	\begin{equation*}
		\begin{array}{c|ccccc}
			$\diagbox{p}{q}$ & 1 & 2 & 3 & 4 & 5 \\\hline
			1                & 0 & 1 & 1 & 0 & 0 \\
			2                & 1 & 0 & 0 & 0 & 1 \\
			3                & 1 & 0 & 0 & 1 & 0 \\
			4                & 0 & 0 & 1 & 0 & 1 \\
			5                & 0 & 1 & 0 & 1 & 0
		\end{array}
	\end{equation*}
\end{example}
\begin{observation}\label{obs:inheritance-column-distribution}
	If $A$ is pairwise reflection-symmetric, then for any choice of columns~$j_1\neq j_2$ we have $\colfreqvec{j_1}=\colfreqvec{j_2}$.
	In particular, the entry of the column-frequency associated to symbol~$q$ is independent of the column index.\hfill\qedsymbol
\end{observation}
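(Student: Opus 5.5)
The observation follows by reading the column-frequency vectors off as marginals of the bivariate frequency matrix $F:=\bifAwithoutsymbs{j_1}{j_2}{A}$ and using its symmetry. Fix columns $j_1\neq j_2$ and a symbol $q\in\range{n}$. Each row $i$ with $A_{i,j_1}=q$ has exactly one symbol $r$ in column $j_2$. Partitioning these rows according to $r$ gives the row-sum identity
\begin{equation*}
	\colfreq{j_1}{q}=\sum_{r\in\range{n}}\bifAcomplete{j_1}{j_2}{q}{r}{A}.
\end{equation*}
Partitioning instead the rows with $A_{i,j_2}=q$ by the symbol in column $j_1$ gives the column-sum identity
\begin{equation*}
	\colfreq{j_2}{q}=\sum_{r\in\range{n}}\bifAcomplete{j_1}{j_2}{r}{q}{A}.
\end{equation*}

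Next I use \cref{def:pairwise-reflection-symmetry}, which gives $F_{q,r}=F_{r,q}$ for all $r\neq q$. The diagonal term $r=q$ appears in both sums as the same quantity $F_{q,q}$, so no hypothesis is needed there. Hence the two sums agree term by term, and $\colfreq{j_1}{q}=\colfreq{j_2}{q}$. Since $q$ was arbitrary, $\colfreqvec{j_1}=\colfreqvec{j_2}$.

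For the second sentence: the statement is vacuous when $M=1$, and otherwise any two column indices are distinct or equal, so $\colfreq{j}{q}$ is constant in $j$. The only point that needs care is that the symmetry hypothesis covers only off-diagonal pairs $p\neq q$, and this is harmless because the diagonal entry is shared by both marginals. I expect no real obstacle beyond writing the double-counting carefully.
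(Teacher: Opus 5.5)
Your proof is correct and is essentially the paper's argument. The paper states this observation without proof, right after noting that the column-frequency vectors are the marginals of the bivariate frequency matrices. Symmetry of $F^{(j_1,j_2)}(A)$ then forces equal row and column sums, which is exactly the double count you write out, including the harmless diagonal term.
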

\begin{definition}
	Let us denote by $\URSmostgeneral{n}{\lambda}{\mu}$ the class of all matrices $A\in\range{n}^{\lambda n\times \mu n}$ which are pairwise reflection-symmetric, and of uniform $\colfreqvec{j}=(\lambda)_{q\in\range{n}}$, $j=1,\dots,\mu n$, as well as of uniform $\rowfreqvec{i}=(\mu)_{q\in\range{n}}$, $i=1,\dots,\lambda n$.
	We abbreviate the set of all such matrices as $\URS{n}{\lambda}:=\URSmostgeneral{n}{\lambda}{1}$.
	For brevity, when speaking about ``an $\URS{n}{\lambda}$'', we refer to ``a matrix in $\URS{n}{\lambda}$''.
\end{definition}
We will primarily focus on $\URS{n}{\lambda}$, i.e., the scenario $\mu=1$ where each row therefore corresponds to a permutation of $\range{n}$.
\begin{observation}\label{obs:shuffling-invariance-pairwise-reflection-symmetry}
	Permuting rows of an $\URS{n}{\lambda}$ preserves pairwise reflection-symmetry, and permuting its columns by some $\pi\in \symmetricgroup{n}$ replaces $\bif{j_1}{j_2}$ by $\bif{\pi^{-1}(j_1)}{\pi^{-1}(j_2)}$, therefore also preserves pairwise reflection-symmetry.
	We can bring an $\URS{n}{\lambda}$ in reduced form by first fixing a (necessarily existing) row starting with $1$ and permuting only the last $n-1$ columns of the matrix so that this row then reads $(1,2,\ldots,n)$; afterward the rows are lexicographically sorted.\hfill$\square$
\end{observation}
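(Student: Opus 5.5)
The plan is to verify the two invariance claims directly from \cref{def:pairwise-reflection-symmetry} and the definition of $\URS{n}{\lambda}$, and then to check that the reduction procedure lands in the reduced form of \cref{def:reduced-generalized-latin-rectangle} without leaving the class.

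\emph{Row permutations.} Each entry $\bifAcomplete{j_1}{j_2}{p}{q}{A}$ counts a set of row indices $i$ with $A_{i,j_1}=p$ and $A_{i,j_2}=q$. Reordering rows by a bijection $\rho$ of $\range{\lambda n}$ maps this set bijectively onto the corresponding set for the new matrix. Hence every bivariate frequency matrix is unchanged, and so is its symmetry. Row- and column-frequency vectors are unchanged for the same reason, so the matrix stays in $\URS{n}{\lambda}$.

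\emph{Column permutations.} Let $B$ be given by $B_{i,j}:=A_{i,\pi(j)}$, so that column $j$ of $B$ is column $\pi(j)$ of $A$. Then
\[
\bifAcomplete{j_1}{j_2}{p}{q}{B}=\card{\{i:A_{i,\pi(j_1)}=p,\,A_{i,\pi(j_2)}=q\}}=\bifAcomplete{\pi(j_1)}{\pi(j_2)}{p}{q}{A},
\]
and with the indexing convention of the statement this is the relabelling $\bif{j_1}{j_2}\mapsto\bif{\pi^{-1}(j_1)}{\pi^{-1}(j_2)}$. The map $(j_1,j_2)\mapsto(\pi(j_1),\pi(j_2))$ permutes the ordered pairs of distinct columns. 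So the collection of bivariate frequency matrices of $B$ is that of $A$ reindexed, and each is symmetric because the corresponding one for $A$ is. Each row of $B$ is still a permutation of $\range{n}$. The columns of $B$ are columns of $A$, so each still contains every symbol exactly $\lambda$ times. Thus $B\in\URS{n}{\lambda}$.

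\emph{Reduction.} Column $1$ contains the symbol $1$ exactly $\lambda\ge1$ times, so some row $r$ starts with $1$. The entries $A_{r,2},\dots,A_{r,n}$ are a permutation of $\{2,\dots,n\}$. Take the $\pi\in\symmetricgroup{n}$ with $\pi(1)=1$ and $\pi(k)=$ the column index where row $r$ carries symbol $k$. Applying it as above turns row $r$ into $(1,2,\dots,n)$ and keeps the matrix in $\URS{n}{\lambda}$. Next sort the rows lexicographically, which is allowed by the row-permutation step. The identity tuple is the lexicographic minimum of $\symmetricgroup{n}$, so a copy of it becomes the first row. The result is reduced in the sense of \cref{def:reduced-generalized-latin-rectangle}.

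None of these steps is a genuine obstacle. The only point needing care is the direction convention $\pi$ versus $\pi^{-1}$ in the column-permutation step.
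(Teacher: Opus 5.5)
Your proposal is correct and matches the paper's intended argument: the observation is stated without a separate proof (closed by $\square$), and you carry out exactly the routine check it leaves implicit, namely invariance of the bivariate frequency counts under row and column reindexing, an explicit $\pi$ fixing column~$1$, and a final lexicographic sort. Your handling of the $\pi$ versus $\pi^{-1}$ convention is also consistent, since with $B_{i,j}=A_{i,\pi(j)}$ the matrix formerly indexed by $(j_1,j_2)$ is now indexed by $(\pi^{-1}(j_1),\pi^{-1}(j_2))$.
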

\begin{observation}\label{obs:row-count-necessarily-even}
	Any $A\in\URS{n}{\lambda}$ has an even row-count $\lambda n$.
\end{observation}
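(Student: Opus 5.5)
Fix two distinct columns, say $j_1=1$ and $j_2=2$ (any pair works since $n\ge 2$; for $n=1$ there is no pair, and I will come back to this case at the end). The plan is to count the rows of $A$ through the bivariate frequency matrix $F:=\bifAwithoutsymbs{1}{2}{A}$. Each row of $A$ contributes exactly one pair $(A_{i1},A_{i2})$, so
\begin{equation*}
	\lambda n=\sum_{p,q\in\range{n}} F_{[p,q]}.
\end{equation*}

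Next I use the structure of $A$ to evaluate this sum. Since every row of an $\URS{n}{\lambda}$ is a permutation of $\range{n}$, we have $A_{i1}\neq A_{i2}$. Hence all diagonal entries vanish, $F_{[p,p]}=0$. Pairwise reflection-symmetry (\cref{def:pairwise-reflection-symmetry}) says $F$ is symmetric, so $F_{[p,q]}=F_{[q,p]}$. Splitting the off-diagonal sum into the parts $p<q$ and $p>q$ and using this symmetry gives
\begin{equation*}
	\lambda n=2\sum_{p<q}F_{[p,q]},
\end{equation*}
which is even.

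The only delicate point is the vanishing diagonal. Without it, the symmetry alone gives no parity information, since diagonal entries are counted once. The diagonal vanishes here because $\mu=1$ forces each row to be a permutation; that is exactly where membership in $\URS{n}{\lambda}$, rather than mere reflection-symmetry, is used.

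The degenerate case $n=1$ has no column pair, so this argument does not apply, and the claim would require $\lambda$ even. I read the statement as tacitly assuming $n\ge 2$, so that a column pair exists, and state that assumption explicitly.
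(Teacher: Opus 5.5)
Your proof is correct and is essentially the paper's argument: the paper pairs each row $(p,q)$ of the two-column submatrix with a reversed row $(q,p)$ via an involution that has no fixed points because $p\neq q$, and your identity $\lambda n=2\sum_{p<q}F_{[p,q]}$ expresses the same pairing as a count. Your caveat $n\geq 2$ is also justified, since the paper's argument tacitly needs a column pair and the $1\times 1$ matrix is an $\URS{1}{1}$ with odd row count (the case $k=0$ of \cref{pro:urs-n-one-iff-power-of-two}).
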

\begin{proof}
	Each row of the submatrix spanned by $(j_1,j_2)$ consists of two distinct symbols, such that each of its rows can be brought into correspondence (technically an involution) with its own (reversed) row; see \cref{fig:example-reflection-symmetry}.
\end{proof}
\begin{lemma}\label{lem:family-of-permutations-plus-reflection-symmetric-implies-urs}
	Assume $A\in\range{n}^{\lambda n\times n}$ has just permutations of $\range{n}$ as rows, i.e., $\rowfreqvec{i}=(1)_{q\in\range{n}}$ for $i=1,\dots,\lambda n$.
	Then $A$ is pairwise reflection-symmetric iff $A\in\URS{n}{\lambda}$.
\end{lemma}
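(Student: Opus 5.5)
The backward implication is immediate: by definition every matrix in $\URS{n}{\lambda}=\URSmostgeneral{n}{\lambda}{1}$ is pairwise reflection-symmetric. So the plan concentrates on the forward implication. Here the row condition $\rowfreqvec{i}=(1)_{q\in\range{n}}$ is assumed, so $A\in\URS{n}{\lambda}$ reduces to showing that every column-frequency vector is uniform, $\colfreqvec{j}=(\lambda)_{q\in\range{n}}$ for all $j\in\range{n}$.

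I will combine two counting facts. First, \cref{obs:inheritance-column-distribution} says that pairwise reflection-symmetry forces $\colfreqvec{j_1}=\colfreqvec{j_2}$ for all columns. Hence for each symbol $q$ the value $c_q:=\colfreq{j}{q}$ does not depend on $j$. (If one wants to see why: summing the symmetric matrix $\bif{j_1}{j_2}$ over rows and over columns gives the same marginal vector, and these marginals are exactly $\colfreqvec{j_1}$ and $\colfreqvec{j_2}$.) The degenerate case $n=1$ causes no trouble, since there is then only one column and it trivially contains symbol $1$ exactly $\lambda$ times.

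Second, I will use the row condition. Since each of the $\lambda n$ rows is a permutation, symbol $q$ occurs exactly once per row, so its total number of occurrences in $A$ is $\lambda n$. Counting the same occurrences column by column gives $\sum_{j=1}^{n}\colfreq{j}{q}=n\,c_q$. Equating the two counts yields $c_q=\lambda$ for every $q$, which is the required column uniformity.

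No step is a real obstacle. The only point needing care is to invoke \cref{obs:inheritance-column-distribution}, or its one-line marginal argument, correctly before the double count.
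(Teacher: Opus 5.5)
Your proof is correct and follows essentially the same route as the paper. Both arguments use \cref{obs:inheritance-column-distribution} to make the frequency $c_q$ of each symbol independent of the column, and then double count the $\lambda n$ occurrences of $q$ (one per row). The only difference is presentation: the paper argues by contradiction (an excess or deficit of $q$ in column~$1$ propagates to every column), while you solve $n\,c_q=\lambda n$ directly.
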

\begin{proof}
	Due to $A\in\range{n}^{\lambda n\times n}$, we have $\lambda n$ occurrences of $q$ among all the entries of $A$, for each $q\in\range{n}$.
	Assume first $A$ is pairwise reflection-symmetric.
	If now symbol~$q$ was present in column~$1$ more (respectively less) than $\lambda$ times, then by pairwise reflection symmetry each other column, by \cref{obs:inheritance-column-distribution}, would also contain more (respectively less) than $\lambda$ times the entry~$q$.
	By counting the occurrences, we would obtain a contradiction on the number of $q$-valued entries throughout the entire matrix~$A$.
	The opposite proof direction is trivial.
\end{proof}
The following \cref{pro:set-interpretability-lambda-one-two-odd-n} is handy as it allows us to regard certain pairwise reflection-symmetric matrices as cardinality-$(2n)$ \emph{subsets} of $\symmetricgroup{n}$ without losing generality.
\begin{proposition}\label{pro:set-interpretability-lambda-one-two-odd-n}
	The following three assertions are valid.
	\begin{enumerate}[label=(\roman*)]
		\item Every $\URS{n}{1}$ consists of $n$ pairwise distinct rows.\label{ite:set-interpretability-lambda-one}
		\item For each odd $n\geq3$, every $\URS{n}{2}$ consists of $2n$ pairwise distinct rows.\label{ite:set-interpretability-lambda-two}
		\item Let $n\geq3$ be odd, $A\in\URS{n}{\lambda}$, and suppose there are precisely $\rho$ rows hosting the same permutation.
		      Then we have $\rho\leq\lfloor\lambda n/(n+3)\rfloor<\lambda$.\label{ite:set-interpretability-lambda-general}
	\end{enumerate}
\end{proposition}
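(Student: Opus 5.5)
Throughout, assume that $\rho$ counts the rows equal to one fixed permutation $\sigma$. Relabelling symbols by $\sigma^{-1}$ maps an $\URS{n}{\lambda}$ to an $\URS{n}{\lambda}$, because pairwise reflection-symmetry and the uniform frequencies are invariant under a global bijection of the alphabet. So we may assume $\sigma=\identity$, and then $A_{ij}=j$ in each of the $\rho$ repeated rows.

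Part \ref{ite:set-interpretability-lambda-one} does not need the symmetry at all. If two rows of an $\URS{n}{1}$ coincide, then in column $1$ the symbol sitting there occurs at least twice. This contradicts $\colfreq{1}{q}=\lambda=1$.

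Part \ref{ite:set-interpretability-lambda-two} will follow from \ref{ite:set-interpretability-lambda-general}. For $\lambda=2$ and $n\ge3$ we have $2n/(n+3)<2$, hence $\rho\le1$, so all $2n$ rows are distinct. The work therefore lies in \ref{ite:set-interpretability-lambda-general}. The strict inequality $\lfloor\lambda n/(n+3)\rfloor<\lambda$ is immediate from $n<n+3$, so only $\rho(n+3)\le\lambda n$ must be shown.

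\emph{Counting transposed pairs.} For each column pair $j<k$ the $\rho$ identity rows give $\bifAcomplete{j}{k}{j}{k}{A}\ge\rho$. By \cref{def:pairwise-reflection-symmetry} this forces $\bifAcomplete{j}{k}{k}{j}{A}\ge\rho$. So there are at least $\rho$ rows $\tau$ with $\tau(j)=k$ and $\tau(k)=j$, i.e.\ rows in which $(j\,k)$ is a $2$-cycle; none of these rows is the identity. A permutation of odd $n$ has at most $(n-1)/2$ two-cycles. Let $m=\lambda n-\rho$ be the number of non-identity rows. Double counting the incidences (row, $2$-cycle) then gives
\[
\rho\binom n2\le m\cdot\tfrac{n-1}2,\qquad\text{i.e.}\qquad \lambda n\ge\rho(n+1).
\]

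\emph{The missing $2\rho$.} This bound is short of the claim by $2\rho$ rows, and closing that gap is the main obstacle. Fixed-point bookkeeping alone does not close it. Each column $j$ holds symbol $j$ exactly $\lambda$ times, so the diagonal incidences total $\lambda n$, and combining this with $2c+f\le n$ for a row with $c$ two-cycles and $f$ fixed points only returns the trivial inequality $\rho\le\lambda$. I therefore plan to exploit the rows attaining equality, namely the involutions with exactly $(n-1)/2$ two-cycles and exactly one fixed point $s$. Such a row $\tau$ pairs $s$ in column $s$ with the symbol $\tau(k)=k'$ in every other column $k$.

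Two consequences should generate the extra rows:
\begin{itemize}[label=--]
\item Pairs $(s,k')$ in columns $(s,k)$ with $k'\neq s$ must be mirrored by the symmetry. The mirror rows have $A_{is}=k'$ and $A_{ik}=s$, and these are generally not covering rows for $(s\,k)$.
\item The column-frequency condition caps symbol $s$ in column $s$ at $\lambda$, of which $\rho$ is already used by the identity rows. Hence at most $\lambda-\rho$ of the covering rows can fix $s$.
\end{itemize}
Combining the cap $\lambda-\rho$ per fixed point with the requirement that each of the $n$ columns needs its $\rho(n-1)$ covering incidences, I expect to show that not all $\ge\rho n$ non-identity rows needed above can be extremal involutions, and that each shortfall costs a row. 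The goal is to upgrade the count to $m\ge\rho(n+2)$, which gives $\lambda n=m+\rho\ge\rho(n+3)$.

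If this direct approach stalls, the fallback is parity. By \cref{obs:row-count-necessarily-even}, $\lambda n$ is even. I would also try to show that the multiset of non-identity rows cannot be exactly $\rho$ copies of a near-one-factorization of $K_n$, since each near-$1$-factor is an involution and its fixed points would then exhaust the diagonal budget. That would force at least one further row per copy, plus one more for parity, which again yields $n+3$.
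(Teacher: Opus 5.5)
Your part (i) and the reduction of (ii) to (iii) are fine. Your double count of the $2$-cycles $(j\,k)$ forced by the $\rho$ identity rows correctly yields $\rho(n+1)\le\lambda n$. This weaker bound already gives $\rho<\lambda$, and hence (ii), since $2n/(n+1)<2$.

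However, the stated bound $\rho\le\lfloor\lambda n/(n+3)\rfloor$ in (iii) is not proved. The passage from $n+1$ to $n+3$ is only announced (``I expect to show'', ``the goal is''). Neither the per-fixed-point mirror-row analysis nor the parity/near-one-factorization fallback is carried out. So there is a genuine gap in (iii).

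Your remark that fixed-point bookkeeping cannot close the gap holds only for the uniform estimate $2c+f\le n$. The missing idea is to sharpen that estimate for derangements and then count globally rather than per symbol $s$:
\begin{enumerate}
\item Since $n$ is odd, a derangement must contain a cycle of length at least $3$. It therefore has at most $(n-3)/2$ two-cycles.
\item By the column frequencies, the $\lambda n-\rho$ non-identity rows carry only $(\lambda-\rho)n$ fixed points in total. So at most $(\lambda-\rho)n$ of them have a fixed point, and at least $(\lambda n-\rho)-(\lambda-\rho)n=\rho(n-1)$ of them are derangements.
\item Every non-identity row has at most $(n-1)/2$ two-cycles, and each derangement has at least one fewer. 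Writing $T$ for the total number of $2$-cycles, this gives
\[
\rho\binom{n}{2}\le T\le\frac{n-1}{2}(\lambda n-\rho)-\rho(n-1)=\frac{n-1}{2}(\lambda n-3\rho),
\]
i.e.\ $\rho(n+3)\le\lambda n$.
\end{enumerate}
This is exactly the paper's proof, phrased there as each derangement contributing at least $3$ to $\sum_{\sigma}(n-f(\sigma)-2t(\sigma))$. Your second bullet, the cap of $\lambda-\rho$ on how often $s$ can be fixed, was the right ingredient. It just has to be summed over all $s$ and combined with the odd-$n$ derangement observation.
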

\begin{proof}
	Property~\ref{ite:set-interpretability-lambda-one} follows immediately from the fact that $\URS{n}{1}$ consists of particular Latin squares and validity of~\ref{ite:set-interpretability-lambda-two} is a direct consequence of~\ref{ite:set-interpretability-lambda-general} which affirms $\rho\leq1=\lfloor2n/(n+3)\rfloor$ for $n\geq 3$.

	Let us address~\ref{ite:set-interpretability-lambda-general}.
	For a permutation~$\sigma$ write $f(\sigma)$ for the number of fixed points of $\sigma$ and $t(\sigma)$ for the number of length-$2$ cycles present in $\sigma$.
	Without loss of generality assume that precisely the first $\rho$ rows correspond to the identity permutation.
	Denote by $T$ the total number of length-$2$ cycles among all rows (the identity rows contribute none).
	By uniform column-frequency $\lambda n$ is the number of fixed points present in an $\URS{n}{\lambda}$; a number of $\rho n$ of them appears already in the first $\rho$ rows, meaning that $(\lambda-\rho)n$ of them must be contained in the rest of non-identity permutations.
	Consequently, in this rest, there can be at most $(\lambda-\rho)n$ permutations having at least one fixed point.
	Complementarily, the number of derangements must be at least $(\lambda n-\rho)-(\lambda-\rho)n=\rho(n-1)$.

	The $n-f(\sigma)-2t(\sigma)$ points lying outside the fixed points and length-$2$ cycles of $\sigma$ sit in cycles of length at least $3$.
	Summing over the non-identity rows~$\sigma$, we have
	\begin{equation*}
		\sum_{\sigma\neq\identity}\bigl(n-f(\sigma)-2t(\sigma)\bigr)=(\lambda n-\rho)n-(\lambda-\rho)n-2T=n\lambda(n-1)-2T.
	\end{equation*}
	Since $n$ is odd, every row hosting a derangement of an odd-sized set and hence containing a cycle of length at least $3$, contributes a summand of at least $3$ to the left-hand side.
	As we earlier observed there are at least $\rho(n-1)$ such rows and we obtain $n\lambda(n-1)-2T\geq 3\rho(n-1)$, equivalently meaning that $T\leq(n-1)(\lambda n-3\rho)/2$.

	Now we also use the fact that the structure of the first $\rho$ rows forces the existence of at least $T\geq\rho\binom{n}{2}=\rho n(n-1)/2$ counterbalancing length-$2$ cycles throughout all rows:
	We chain the two estimates on $T$ to obtain after simplification that $\lambda n-3\rho\geq \rho n$ meaning $\rho \leq \lambda n/(n+3)$.
\end{proof}
In the following we give another insight linking $\URS{n}{\lambda}$ and perfect sequence covering arrays.
\begin{definition}[adapted from~\cite{yuster2020perfect}]\label{def:psca}
	A matrix~$A\in\range{n}^{L\times n}$ is called a \emph{Perfect Sequence Covering Array} (PSCA) of \emph{strength} $k$ if for all $(\sigma_1,\ldots,\sigma_k)\in \range{n}^k$ with pairwise different entries we have that $\{i\in\range{L}:(\sigma_1,\ldots,\sigma_k)\text{ }\allowbreak{}\text{is}\text{ a}\text{ subsequence }\allowbreak{}\text{of }(A_{ij})_{j\in \range{n}}\}$ has cardinality~$L/k!$.
\end{definition}
\begin{observation}\label{obs:urs-is-strength-two}
	Every $A\in\URS{n}{\lambda}$ is a PSCA of strength $2$ whose \emph{row-inverse} (the matrix obtained by replacing each row with its inverse permutation) is again a PSCA of strength $2$.
\end{observation}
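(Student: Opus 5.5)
The plan is to verify both claims directly from the definitions, reading each row $\sigma$ of $A$ as a permutation: the entry $A_{ij}=\sigma(j)$ is the symbol in position $j$. A pair of distinct symbols $(p,q)$ is a subsequence of row $\sigma$ iff $\sigma^{-1}(p)<\sigma^{-1}(q)$. Since $A$ has $L=\lambda n$ rows, the strength-$2$ condition asks that exactly $\lambda n/2$ rows place $p$ before $q$; this is an integer by \cref{obs:row-count-necessarily-even}.

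For $A$ itself, fix distinct symbols $p,q$. Decompose the number of rows placing $p$ before $q$ over all position pairs $(j_1,j_2)$ with $j_1<j_2$. This gives $N_{pq}=\sum_{j_1<j_2}\bifAcomplete{j_1}{j_2}{p}{q}{A}$, and analogously $N_{qp}=\sum_{j_1<j_2}\bifAcomplete{j_1}{j_2}{q}{p}{A}$. Pairwise reflection-symmetry makes the two sums equal term by term, so $N_{pq}=N_{qp}$. Every row contains both $p$ and $q$ exactly once, because each row is a permutation. Hence $N_{pq}+N_{qp}=\lambda n$, and both equal $\lambda n/2=L/2!$.

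For the row-inverse $B$ with rows $\sigma^{-1}$, we have $B_{ip}=j$ exactly when $A_{ij}=p$, so the roles of positions and symbols swap. The pair $(j_1,j_2)$ of distinct values is a subsequence of the row $\sigma^{-1}$ iff $\sigma(j_1)<\sigma(j_2)$, that is, iff the symbol in column $j_1$ of $A$ is smaller than the symbol in column $j_2$. The number of such rows is therefore $\sum_{p<q}\bifAcomplete{j_1}{j_2}{p}{q}{A}$. By symmetry of $\bifAwithoutsymbs{j_1}{j_2}{A}$, this equals $\sum_{p<q}\bifAcomplete{j_1}{j_2}{q}{p}{A}$, which counts the rows with $\sigma(j_1)>\sigma(j_2)$. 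These two counts together cover all $\lambda n$ rows, since the diagonal of $\bifAwithoutsymbs{j_1}{j_2}{A}$ vanishes for permutation rows. So each count is $\lambda n/2$, and $B$ is a PSCA of strength $2$ as well. I would also note that $B$ again has $\lambda n$ permutation rows, so it has the right format.

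No step is a real obstacle. The only care needed is the bookkeeping: keeping position and symbol roles straight, and checking that the inverse correspondence $\sigma\mapsto\sigma^{-1}$ exchanges the two summation types.
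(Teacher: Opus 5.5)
Your proposal is correct and follows the paper's proof essentially step by step. For $A$ you use the same termwise decomposition $N_{p,q}=\sum_{j_1<j_2}F^{(j_1,j_2)}_{[p,q]}$ together with $N_{p,q}+N_{q,p}=\lambda n$, and for the row-inverse you use the same symmetry of each column-pair frequency matrix, with the sum over $p<q$ just making explicit what the paper states in one line.
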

\begin{proof}
	Fix an ordered pair $(p,q)\in\range{n}^2$ with $p\neq q$, and let $N_{p,q}$ denote the number of rows of $A$ in which $(p,q)$ appears as a length-$2$ subsequence.
	Since every row of $A$ is a permutation of $\range{n}$, the symbols~$p$ and $q$ each appear exactly once in each row, and $(p,q)$ appears as a subsequence of row~$i$ iff the position of $p$ precedes the position of $q$; in particular $(p,q)$ and $(q,p)$ partition the rows according to that ordering, giving $N_{p,q}+N_{q,p}=\lambda n$.
	Decomposing by the positions of $p$ and~$q$, we get $N_{p,q}=\sum_{1\leq j_1<j_2\leq n}\bifMatfree{j_1}{j_2}{p}{q}$, and pairwise reflection-symmetry gives $\bifMatfree{j_1}{j_2}{p}{q}=\bifMatfree{j_1}{j_2}{q}{p}$ for every $j_1\neq j_2$, so $N_{p,q}=N_{q,p}$.
	Combined with $N_{p,q}+N_{q,p}=\lambda n$ this yields $N_{p,q}=\lambda n/2 = L/2!$.
	Analogously, fixing a column pair~$(a,b)$ rather than a symbol pair, symmetry of $\bifMatfreeSymfree{a}{b}$ gives $\card{\{i:A_{i,a}<A_{i,b}\}}=\card{\{i:A_{i,a}>A_{i,b}\}}=L/2!$; since row~$i$ of the row-inverse lists $a$ before $b$ exactly when $A_{i,a}<A_{i,b}$, this is the strength-$2$ requirement for the row-inverse.
\end{proof}

\section{Constructions}\label{sec:constructions}
Let us first constructively argue the existence of representatives for $\lambda=2$.
\begin{lemma}\label{lem:vertically-stacking-latin-squares-yields-reflection-symmetry}
	For a cyclic Latin square $A\in\range{n}^{n\times n}$ and its version with reversed rows, $A'=(A_{i,n-j+1})_{i,j=1}^n$, their collocation $\binom{A}{A'}\in\range{n}^{2n\times n}$ is pairwise reflection-symmetric.
\end{lemma}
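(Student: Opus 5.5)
Work with the cyclic Latin square in the explicit form $A_{ij}=i+j \bmod n$, using residues in $\mathbb{Z}_n$ for rows, columns and symbols; relabelling symbols at the end is harmless. This is legitimate because pairwise reflection-symmetry is invariant under symbol relabelling. By \cref{obs:shuffling-invariance-pairwise-reflection-symmetry} it is also invariant under row and column permutations. I read ``a cyclic Latin square'' as the Cayley table of $(\mathbb{Z}_n,+)$ with columns in the natural order, so that reversing the columns is meaningful. Then $A'_{ij}=A_{i,-j+c}$ for a fixed constant $c$ (with the indexing $1,\dots,n$ one has $c=1$). The reflection $j\mapsto c-j$ is an affine map of $\mathbb{Z}_n$, so $A'_{ij}=i+c-j$.

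Fix columns $j_1\neq j_2$ and a symbol pair $p\neq q$, and count the contributions of the two blocks separately.

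\emph{Top block.} A row $i$ of $A$ contributes to $F_{[p,q]}$ iff $i+j_1=p$ and $i+j_2=q$. This happens, for exactly one $i$, iff $q-p=j_2-j_1=:d$. It contributes to $F_{[q,p]}$ iff $p-q=d$.

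\emph{Bottom block.} A row $i$ of $A'$ contributes to $F_{[p,q]}$ iff $i+c-j_1=p$ and $i+c-j_2=q$, i.e.\ iff $p-q=d$, for exactly one $i$. It contributes to $F_{[q,p]}$ iff $q-p=d$.

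Adding the two blocks gives
\[
F^{(j_1,j_2)}_{[p,q]}=[q-p=d]+[p-q=d]=F^{(j_1,j_2)}_{[q,p]},
\]
so every bivariate frequency matrix is symmetric. The stacked matrix therefore satisfies \cref{def:pairwise-reflection-symmetry}.

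It also lies in $\URS{n}{2}$: every row of $\binom{A}{A'}$ is a permutation, so \cref{lem:family-of-permutations-plus-reflection-symmetric-implies-urs} applies. This can also be checked directly, since each column of the stack contains every symbol twice.

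There is no real obstacle; the computation is routine. The only care needed is at the start: if ``cyclic Latin square'' is taken only up to isotopy, the statement needs the standard arrangement, or at least columns indexed by an arithmetic progression in $\mathbb{Z}_n$, so that column reversal acts as negation of the column index. Once that normalization is fixed, the two-block counting closes the proof.
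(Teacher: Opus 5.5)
Your proof is correct and follows essentially the same route as the paper. Both arguments fix the standard Cayley-table form of $(\mathbb{Z}_n,+)$ and observe that the top block contributes to $(p,q)$ exactly when $q-p\equiv j_2-j_1$, while the reversed block contributes exactly when $q-p\equiv-(j_2-j_1)$; adding the two blocks gives a symmetric count.

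Your Iverson-bracket sum $[q-p=d]+[p-q=d]$ is slightly more careful than the paper's final display. It correctly yields the value $2$ when $n$ is even and $j_2-j_1\equiv n/2$, whereas the paper's display asserts $1$ in that case. Your caveat that the natural column order is needed, rather than an arbitrary isotopic copy, matches the normalization the paper makes implicitly.
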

\begin{proof}
	Let $A$ be the cyclic Latin square naturally identifiable with the Cayley table of~$(\mathbb{Z}_n, +)$; see \cref{sec:preliminaries}.
	Then the column-difference $A_{i,j_2}-A_{i,j_1}\equiv j_2-j_1\pmod n$ is independent of $i$.
	As $i$ ranges over $\range{n}$, the first coordinate $A_{i,j_1}$ takes each value exactly once.
	Therefore
	\begin{equation*}
		\bifAcomplete{j_1}{j_2}{p}{q}{A}=\begin{cases}1 & \text{ if }q-p\equiv j_2-j_1\pmod{n}, \\
             0 & \text{ otherwise.}\end{cases}
	\end{equation*}
	By row-reversal
	\begin{equation*}
		\bifAcomplete{j_1}{j_2}{p}{q}{A'}=\bifAcomplete{n-j_1+1}{n-j_2+1}{p}{q}{A}=\begin{cases}1 & \text{ if }q-p\equiv -(j_2-j_1)\pmod{n}, \\
             0 & \text{ otherwise.}\end{cases}
	\end{equation*}
	Summing, we obtain the bivariate frequency matrix
	\begin{equation*}
		\bifAcomplete{j_1}{j_2}{p}{q}{\binom{A}{A'}}=\begin{cases}1 & \text{ if }q-p\equiv\pm(j_2-j_1)\pmod{n}, \\
             0 & \text{ otherwise,}\end{cases}
	\end{equation*}
	which is symmetric per column-pair $(j_1,j_2)$.
\end{proof}
\begin{corollary}\label{cor:even-lambda-setting-is-simple}
	For arbitrary $n$ and \emph{even} $\lambda\geq 2$ there exists an~$\URS{n}{\lambda}$.
\end{corollary}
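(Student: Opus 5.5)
The plan is to start from the $\lambda=2$ building block of \cref{lem:vertically-stacking-latin-squares-yields-reflection-symmetry} and reach every even $\lambda$ by vertical stacking.

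First, take the cyclic Latin square $A$ of order $n$ and its row-reversed version $A'$. By \cref{lem:vertically-stacking-latin-squares-yields-reflection-symmetry}, the matrix $B:=\binom{A}{A'}\in\range{n}^{2n\times n}$ is pairwise reflection-symmetric. Each of its rows is a permutation of $\range{n}$, because the rows of a Latin square are permutations and reversing a row keeps it a permutation. Hence \cref{lem:family-of-permutations-plus-reflection-symmetric-implies-urs} gives $B\in\URS{n}{2}$. One could also verify directly that every column of $B$ contains each symbol exactly twice, once from $A$ and once from $A'$.

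Second, for even $\lambda=2m$, form $C\in\range{n}^{\lambda n\times n}$ by stacking $m$ copies of $B$ vertically. The bivariate frequency matrices are additive under vertical concatenation, so for every column pair $(j_1,j_2)$ we have
\[
F^{(j_1,j_2)}(C)=m\,F^{(j_1,j_2)}(B).
\]
A nonnegative multiple of a symmetric matrix is symmetric, so $C$ is pairwise reflection-symmetric. Its rows are still permutations, so \cref{lem:family-of-permutations-plus-reflection-symmetric-implies-urs} again yields $C\in\URS{n}{\lambda}$. Alternatively, each column of $C$ contains each symbol $2m=\lambda$ times.

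The degenerate case $n=1$ needs a glance. For $n=1$ there are no pairs of distinct columns, so the symmetry condition is vacuous, and the $\lambda\times1$ all-ones matrix belongs to $\URS{1}{\lambda}$. I do not expect a real obstacle. The only point to state explicitly is the additivity of $F$ under stacking, which follows immediately from the definition of $F$ as a count over rows.
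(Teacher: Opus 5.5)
Your proof is correct and matches the paper's: stack $\lambda/2$ copies of the $2n\times n$ matrix $\binom{A}{A'}$ from \cref{lem:vertically-stacking-latin-squares-yields-reflection-symmetry}. Your explicit remarks on the additivity of the bivariate frequency matrices under vertical stacking and on invoking \cref{lem:family-of-permutations-plus-reflection-symmetric-implies-urs} spell out details the paper leaves implicit.
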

\begin{proof}
	Vertically stack $\lambda/2$ copies of the $2n\times n$ matrix~$\binom{A}{A'}$ in \cref{lem:vertically-stacking-latin-squares-yields-reflection-symmetry}.
\end{proof}
By \cref{obs:row-count-necessarily-even}, an even row-count $\lambda n$ is necessary for existence.
Hence a natural question emerging from \cref{cor:even-lambda-setting-is-simple} is whether for \emph{odd} $\lambda$ and even $n$ this is the case, too.
Its answer will turn out to be subtle and somewhat unexpected.
Let us here anticipate selected results of the computational enumeration in \cref{sec:counting-urs}:
While for $n\in\{2,4,8\}$ representatives in $\URS{n}{1}$ exist, this is not the case for $n\in\{6,10,12\}$.
However, we find representatives for $n\in\{6,12\}$ with respect to the successive odd candidate multiplicity~$\lambda=3$; for $n=10$ it is computationally intractable to show (non)existence.
Let us provide some notation to inspect the phenomenon more systematically and denote, for an even $n$, by $\kURS{n}$ the smallest \emph{odd multiplicity} $\lambda$ such that an $\URS{n}{\lambda}$ exists.
Note that from the subsequent \cref{thm:involutory-derangement-construction} well-definedness of $\kURS{n}$ ensues.
The following gives a ``more delicate'' counterpart to \cref{cor:even-lambda-setting-is-simple} for odd multiplicities.
\begin{corollary}
	There exists a $\URS{n}{\lambda}$ for any odd $\lambda\geq \kURS{n}$.
\end{corollary}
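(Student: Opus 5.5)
The plan is to stack vertically an $\URS{n}{\kURS{n}}$ with copies of the even-multiplicity construction from \cref{cor:even-lambda-setting-is-simple}. This relies on one simple fact, which we prove first: vertical stacking (row-wise concatenation) preserves membership in the $\URS{}{}$ classes with additive multiplicities. Precisely, if $A\in\URS{n}{\lambda_1}$ and $B\in\URS{n}{\lambda_2}$, then $\binom{A}{B}\in\URS{n}{\lambda_1+\lambda_2}$.

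For this fact, the bivariate frequency matrices are additive over disjoint row blocks:
\begin{equation*}
	\bifAcomplete{j_1}{j_2}{p}{q}{\tbinom{A}{B}}=\bifAcomplete{j_1}{j_2}{p}{q}{A}+\bifAcomplete{j_1}{j_2}{p}{q}{B}.
\end{equation*}
A sum of symmetric matrices is symmetric, so pairwise reflection-symmetry is inherited. Every row of the stacked matrix is still a permutation of $\range{n}$, and the matrix has $(\lambda_1+\lambda_2)n$ rows. Hence \cref{lem:family-of-permutations-plus-reflection-symmetric-implies-urs} gives membership in $\URS{n}{\lambda_1+\lambda_2}$. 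One can also check the column frequencies directly: they add to $\lambda_1+\lambda_2$.

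Now let $\lambda\geq\kURS{n}$ be odd. By the paper's remark after the definition, $\kURS{n}$ is well defined and odd, so there is some $A\in\URS{n}{\kURS{n}}$. The difference $\lambda-\kURS{n}$ is even and nonnegative. If it is zero we simply take $A$. Otherwise \cref{cor:even-lambda-setting-is-simple} supplies $B\in\URS{n}{\lambda-\kURS{n}}$, for instance $(\lambda-\kURS{n})/2$ stacked copies of the matrix from \cref{lem:vertically-stacking-latin-squares-yields-reflection-symmetry}. The stacking fact then gives $\binom{A}{B}\in\URS{n}{\lambda}$.

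There is no real obstacle. The only points needing care are the trivial case $\lambda=\kURS{n}$ and the fact that well-definedness of $\kURS{n}$ is taken from the forward-referenced \cref{thm:involutory-derangement-construction}. For odd $n$ the statement is vacuous or meaningless, since $\kURS{n}$ is defined only for even $n$; we therefore tacitly assume $n$ is even, as in the definition of $\kURS{n}$.
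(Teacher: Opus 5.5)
Your proposal is correct and follows the paper's own proof, which likewise stacks $r=(\lambda-\kURS{n})/2$ copies of a $\URS{n}{2}$ onto a $\URS{n}{\kURS{n}}$. You also spell out why vertical stacking works: bivariate frequency matrices add over the two row blocks, and then \cref{lem:family-of-permutations-plus-reflection-symmetric-implies-urs} applies. The paper leaves that step implicit.
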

\begin{proof}
	For $\lambda=\kURS{n}$ nothing is to show.
	For $\lambda=\kURS{n}+2r$ take $r$ copies of a $\URS{n}{2}$ (existing by \cref{cor:even-lambda-setting-is-simple}) and stack it with a $\URS{n}{\kURS{n}}$.
\end{proof}
In order to prove the next result we recall two concepts.
For a positive integer $n$ we understand under the \emph{semifactorial} $\sfactexcludepars{n}$ of $n$ the product of all positive integers that have the parity of $n$ and do not exceed $n$, i.e.,
\begin{equation*}
	\sfactexcludepars{n}:=n(n-2)(n-4)\cdots(2-(n\bmod2)).
\end{equation*}
It is useful when counting the number of \emph{involutory derangements} (involutions that are derangements, see \cref{sec:preliminaries}):
When $n$ is even, the number of such permutations is
$\sfact{n-1}$, otherwise, when $n$ is odd, no such permutations exist.
The construction presented in the following theorem is shown for $n=6$ in \cref{fig:semifactorial-size-construction-six-three}.
\begin{theorem}\label{thm:involutory-derangement-construction}
	For every even $n\geq2$ there exists a representative of $\URS{n}{\lambda}$ on the odd number of rows~$\lambda n$, where $\lambda=\sfact{n-3}$.
	The convention $\sfact{-1}:=1$ is assumed here.
\end{theorem}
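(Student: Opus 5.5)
The construction I would try uses a family of involutions, and the check is meant to reduce pairwise reflection-symmetry to a counting identity for perfect matchings. The count $\lambda n=n\cdot\sfact{n-3}$ suggests indexing the rows by pairs $(a,M)$. Here $a\in\range{n}$ is a distinguished point, and $M$ ranges over the $\sfact{n-3}$ perfect matchings (involutory derangements) of an $(n-2)$-element set depending on $a$. Concretely I would fix a pairing $\varphi$ of $\range{n}$ (say $\varphi(a)=a\pm1$ inside the blocks $\{1,2\},\{3,4\},\dots$). The row $\sigma_{a,M}$ acts as $M$ on $\range{n}\setminus\{a,\varphi(a)\}$, and its values on $a,\varphi(a)$ are twisted by a cyclic shift of $\mathbb{Z}_n$, so that every symbol appears equally often in every column. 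The plan is first to fix this indexing so that each row is a permutation. By \cref{lem:family-of-permutations-plus-reflection-symmetric-implies-urs} it then suffices to prove pairwise reflection-symmetry, since uniform column frequencies follow automatically.

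The symmetry check relies on the basic involution identity. If $\sigma$ is an involution with $\sigma(j_1)=p$ and $\sigma(j_2)=q$, then conjugating by a suitable permutation, in particular right multiplication by the transposition $(j_1\,j_2)$ combined with relabelling the matching, produces another row of the family carrying $(q,p)$ in columns $(j_1,j_2)$. For each column pair $(j_1,j_2)$ I would therefore exhibit an explicit bijection on the index set $\{(a,M)\}$ that exchanges the two entries in columns $j_1,j_2$, as in \cref{fig:example-reflection-symmetry}. The case split is by how $\{j_1,j_2\}$ meets $\{a,\varphi(a)\}$: both inside, one inside, or neither. When neither column meets the special pair, the map is $M\mapsto(p\,q)\,M\,(p\,q)$ restricted suitably, which is again a perfect matching of the same ground set. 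In the mixed cases I would instead compare the counts $\bif{j_1}{j_2}_{p,q}$ and $\bif{j_1}{j_2}_{q,p}$ directly, since every count of matchings containing prescribed edges equals $\sfact{n-2k-1}$ and is therefore symmetric in $p,q$.

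The main obstacle is the mixed case, in which exactly one of $j_1,j_2$ lies in $\{a,\varphi(a)\}$. There the shift-twisted values break the naive symmetry, so the twist has to be chosen so that the contributions of the two indices $a$ and $\varphi(a)$ cancel each other's asymmetry. The first attempt would be to pick the twist as in the proof of \cref{lem:vertically-stacking-latin-squares-yields-reflection-symmetry}, where the differences $\pm(j_2-j_1)$ appear in complementary pairs. If that fails, I would fall back on computing all entries of $\bif{j_1}{j_2}$ in closed form as sums of semifactorials and verifying symmetry term by term. Finally, the case $n=2$ with $\sfact{-1}=1$ is checked by hand: the $2\times2$ Latin square with rows $(1,2),(2,1)$ works.
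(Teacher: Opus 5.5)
\textbf{There is a genuine gap: the proposal never defines a family that works, and the family it does describe cannot work.}

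\textbf{Why your family fails.} Read literally, suppose $\sigma_{a,M}$ agrees with a perfect matching $M$ of $[n]\setminus\{a,\varphi(a)\}$. Being a permutation, it must then map $\{a,\varphi(a)\}$ onto itself. So there is no room for a cyclic-shift twist: every row is an involution that either fixes or swaps $a$ and $\varphi(a)$.

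But in a pairwise reflection-symmetric family of involutions, every row with a fixed point must be the identity. Indeed, let $\sigma(j_1)=j_1$ and $\sigma(j_2)=q\neq j_2$, so also $q\neq j_1$. Then the pair $(j_1,q)$ occurs in columns $(j_1,j_2)$. Its reversal needs a row $\tau$ with $\tau(j_1)=q$ and $\tau(j_2)=j_1$. For an involution this forces $\tau(j_1)=j_2$, i.e.\ $q=j_2$, a contradiction.

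This gives a dichotomy for $n\geq4$:
\begin{itemize}
\item Every row fixing $a,\varphi(a)$ is a non-identity involution with fixed points, so it breaks the symmetry.
\item If all rows swap $a,\varphi(a)$, no row has a fixed point. Then $F^{(j_1,j_2)}_{[j_1,j_2]}=0$, yet $F^{(j_1,j_2)}_{[j_2,j_1]}>0$ for any transposition $(j_1\,j_2)$ occurring in some row.
\end{itemize}
A twist that escapes this dichotomy must produce non-involutions. Then neither your involution identity nor the matching counts $(n-2k-1)!!$ describe the rows. So the mixed case you flag is not a bookkeeping step for a later closed-form computation; it is exactly where this family breaks down.

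A smaller point: right multiplication by $(j_1\,j_2)$ does not preserve involutions. Conjugation by $(p\,q)$ does, but it yields the reversed pair only when $\{p,q\}\cap\{j_1,j_2\}=\emptyset$.

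\textbf{The missing idea.} The observation above points straight to the paper's construction: all fixed points must be concentrated in identity rows. Take $\lambda=(n-3)!!$ copies of $\identity$ together with all $(n-1)!!$ involutory derangements of $[n]$. This gives $(n-3)!!\,(1+(n-1))=\lambda n$ rows. In columns $(j_1,j_2)$:
\begin{itemize}
\item The identity block contributes $\lambda$ at $(j_1,j_2)$.
\item The involutory derangements containing the transposition $(j_1\,j_2)$ contribute exactly $(n-3)!!=\lambda$ at $(j_2,j_1)$.
\item An entry $(p,q)$ with $p,q,j_1,j_2$ pairwise distinct equals $(n-5)!!$, which is symmetric in $p,q$. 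This is your conjugation argument from the ``neither'' case.
\item Every other entry vanishes, since involutory derangements have no fixed points and satisfy $\sigma(j_1)=j_2\Leftrightarrow\sigma(j_2)=j_1$.
\end{itemize}
Hence each bivariate frequency matrix is symmetric, and \cref{lem:family-of-permutations-plus-reflection-symmetric-implies-urs} supplies column uniformity, with no twist needed. For $n=2$ this reduces to your hand check with rows $(1,2),(2,1)$.
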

\begin{proof}
	Note that $\sfact{n-3}$ is automatically odd for even $n$.
	Let $A$ be the $\lambda\times n$ matrix obtained by stacking $\lambda=\sfact{n-3}$ copies of the identity $\identity\in \symmetricgroup{n}$.
	Next, build the matrix~$B$ consisting of all involutory derangements $\sigma\in \symmetricgroup{n}$.
	We argue now that $C=\binom{A}{B}$ is an $\URS{n}{\lambda}$; see \cref{fig:semifactorial-size-construction-six-three} for an example for $n=6$.
	The row-count of $C$ is
	\begin{equation*}
		\sfact{n-3} + \sfact{n-1} = \sfact{n-3}\ (1 + (n-1)) = \lambda n.
	\end{equation*}
	We show first pairwise reflection-symmetry.
	Fix $j_1\neq j_2$ and observe that, on the one hand,
	\begin{equation}
		\bifAcomplete{j_1}{j_2}{p}{q}{A}=
		\begin{cases}
			\lambda & \text{ if }(p,q)=(j_1,j_2), \\
			0       & \text{ otherwise,}
		\end{cases}\label{eq:identify-part-of-matrix-a}
	\end{equation}
	and, on the other hand,
	\begin{equation}
		\bifAcomplete{j_1}{j_2}{p}{q}{B}=
		\begin{cases}
			0           & \text{ if }(p,q)=(j_1,j_2),          \\
			\lambda     & \text{ if }(p,q)=(j_2,j_1),          \\
			\sfact{n-5} & \text{ if }\card{\{p,q,j_1,j_2\}}=4, \\
			0           & \text{ otherwise.}
		\end{cases}\label{eq:fixed-point-free-involution-part-of-matrix-b}
	\end{equation}
	In~\eqref{eq:identify-part-of-matrix-a}, respectively~\eqref{eq:fixed-point-free-involution-part-of-matrix-b}, the bivariate frequency for the case~$(p,q)=(j_1,j_2)$ results from the requirement of having only fixed points, respectively none at all.
	Let us further explain~\eqref{eq:fixed-point-free-involution-part-of-matrix-b}:
	When $p=j_2$, we are searching all rows~$i$ such that $B_{i,j_1}=j_2$; as we have available only involutions as rows in $B$, we automatically need $B_{i,j_2}=j_1$, too.
	After fixing $(p,q)=(j_2,j_1)$ we note that these can be extended to full involutory derangements on $\range{n}$ in $\sfact{(n-2)-1}=\lambda$ ways (the count of involutory derangements on the remaining $n-2$ symbols).
	When $\card{\{p,q,j_1,j_2\}}=4$, all assigned symbols differ from the column-indices receiving the symbols.
	Again, the number of rows meeting the induced constraints $B_{i,j_1}=p$ and $B_{i,p}=j_1$, respectively $B_{i,j_2}=q$ and $B_{i,q}=j_2$ leaves room for $\sfact{(n-4)-1}$ involutory derangements.
	Summing~\eqref{eq:identify-part-of-matrix-a} and~\eqref{eq:fixed-point-free-involution-part-of-matrix-b} yields a function invariant under $(p,q)\rightarrow(q,p)$, therefore
	$\bifMatfreeSymfree{j_1}{j_2}= \transpose{\left(\bifMatfreeSymfree{j_1}{j_2}\right)}$.

	Being pairwise reflection-symmetric and consisting of permutations in $\symmetricgroup{n}$ as rows, we have $C\in\URS{n}{\lambda}$ by \cref{lem:family-of-permutations-plus-reflection-symmetric-implies-urs}.
\end{proof}
\begin{figure}[tb]
	\begin{equation*}
		\setcounter{MaxMatrixCols}{18}
		{
			\left(\begin{matrix}
				1 & 1 & 1 & 2 & 2 & 2 & 3 & 3 & 3 & 4 & 4 & 4 & 5 & 5 & 5 & 6 & 6 & 6 \\
				2 & 2 & 2 & 1 & 1 & 1 & 4 & 5 & 6 & 3 & 5 & 6 & 3 & 4 & 6 & 3 & 4 & 5 \\
				3 & 3 & 3 & 4 & 5 & 6 & 1 & 1 & 1 & 2 & 6 & 5 & 2 & 6 & 4 & 2 & 5 & 4 \\
				4 & 4 & 4 & 3 & 6 & 5 & 2 & 6 & 5 & 1 & 1 & 1 & 6 & 2 & 3 & 5 & 2 & 3 \\
				5 & 5 & 5 & 6 & 3 & 4 & 6 & 2 & 4 & 6 & 2 & 3 & 1 & 1 & 1 & 4 & 3 & 2 \\
				6 & 6 & 6 & 5 & 4 & 3 & 5 & 4 & 2 & 5 & 3 & 2 & 4 & 3 & 2 & 1 & 1 & 1\end{matrix}\right)}^{\top}
	\end{equation*}
	\caption{An $\URS{6}{3}$ with three coinciding permutations at the beginning (transposition to save space).
		This shows that duplicates do not render impossible the existence of an optimal $\URS{n}{\kURS{n}}$.\label{fig:semifactorial-size-construction-six-three}}
\end{figure}
However, the upper bound on $\kURS{n}$ resulting from \cref{thm:involutory-derangement-construction} is superexponential in $n$; therefore, we start our search for tighter bounds.
Let us examine in the next theorem the existence for $\lambda=1$, i.e., the smallest possible odd multiplicity.
In the following we show that in this setting the property of being pairwise reflection-symmetric is equivalent to being a \emph{$2$-subsquare complete} Latin square.
\begin{definition}[{\cite{heinrich1981maximum,hiner1970subsquare},\cite[Section~1.6]{keedwell2015latin}}]\label{def:two-subsquare-complete}
	Let $A\in\range{n}^{n\times n}$ be a Latin square.
	An \emph{intercalate} of $A$ is a set of four index-pairs $(i_1,j_1),(i_1,j_2),(i_2,j_1),(i_2,j_2)$ with $i_1\neq i_2$ and $j_1\neq j_2$ inducing\footnote{By \emph{induce} we mean, technically, up to bijective relabeling to become a matrix in $\range{2}^{2\times 2}$.}
	on $A$ a $2\times2$ Latin subsquare, i.e., $A_{i_1,j_1}=A_{i_2,j_2}$ and $A_{i_1,j_2}=A_{i_2,j_1}$.
	We say that $A$ is \emph{$2$-subsquare complete} if any two distinct cells carrying the same symbol occur as two of the four corners of a common intercalate.
\end{definition}
\begin{lemma}\label{lem:two-subsquare-complete-iff-pairwise-reflection-symmetric}
	We have $A\in\URS{n}{1}$ iff $A$ is a $2$-subsquare complete Latin square.
\end{lemma}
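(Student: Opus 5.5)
The plan is to first reduce both sides to statements about Latin squares, and then to compare the two conditions cell-pair by cell-pair. For $\lambda=1$, a matrix in $\URS{n}{1}$ is an $n\times n$ matrix with uniform row and column frequencies equal to $1$, i.e.\ a Latin square. Conversely, a Latin square has permutations as rows, so by \cref{lem:family-of-permutations-plus-reflection-symmetric-implies-urs} it lies in $\URS{n}{1}$ iff it is pairwise reflection-symmetric. It therefore suffices to show: a Latin square $A$ is pairwise reflection-symmetric iff it is $2$-subsquare complete.

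The key step is to translate both conditions into a statement about a fixed row $i_1$ and column pair $(j_1,j_2)$. In a Latin square, $\bifAcomplete{j_1}{j_2}{p}{q}{A}\in\{0,1\}$ for all $p,q$, since column $j_1$ contains $p$ only once. Moreover, $p\neq q$ automatically holds in every row, because rows are permutations. Pairwise reflection-symmetry thus says:
\begin{quote}
whenever some row $i_1$ has $(A_{i_1,j_1},A_{i_1,j_2})=(p,q)$, there is a row $i_2$ with $(A_{i_2,j_1},A_{i_2,j_2})=(q,p)$, where necessarily $i_2\neq i_1$.
\end{quote}
This is exactly the statement that the cells $(i_1,j_1),(i_1,j_2)$ extend to an intercalate using columns $j_1,j_2$.

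For the direction ``pairwise reflection-symmetric $\Rightarrow$ $2$-subsquare complete'', take two distinct cells carrying the same symbol $p$. In a Latin square these lie in different rows and different columns, say $(i_1,j_1)$ and $(i_2,j_2)$. Put $q:=A_{i_1,j_2}\neq p$.
\begin{itemize}
\item Symmetry of $\bifMatfreeSymfree{j_1}{j_2}$ gives a row $i$ with $A_{i,j_1}=q$ and $A_{i,j_2}=p$.
\item Column $j_2$ contains $p$ only in row $i_2$, so $i=i_2$.
\item Hence $A_{i_2,j_1}=q=A_{i_1,j_2}$, and the four cells form an intercalate.
\end{itemize}

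For the converse, fix $j_1\neq j_2$ and $p\neq q$ with $\bifAcomplete{j_1}{j_2}{p}{q}{A}=1$, realized by row $i_1$. Let $i_2$ be the unique row with $A_{i_2,j_2}=p$; here $i_2\neq i_1$ since $A_{i_1,j_2}=q\neq p$.
\begin{itemize}
\item The cells $(i_1,j_1)$ and $(i_2,j_2)$ are distinct and both carry $p$.
\item By $2$-subsquare completeness they are two corners of a common intercalate.
\item Being in different rows and columns, they must be diagonal corners, so the intercalate is spanned by rows $i_1,i_2$ and columns $j_1,j_2$.
\item Therefore $A_{i_2,j_1}=A_{i_1,j_2}=q$, which gives $\bifAcomplete{j_1}{j_2}{q}{p}{A}=1$.
\end{itemize}
Since all entries of the bivariate frequency matrix are $0$ or $1$, this yields $\bifMatfreeSymfree{j_1}{j_2}=\transpose{\bigl(\bifMatfreeSymfree{j_1}{j_2}\bigr)}$.

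The main obstacle is the careful justification of the pivotal observation used in the last direction. It must be checked that the two same-symbol cells cannot be adjacent corners (same row or same column) of an intercalate, so that the intercalate is forced to live on exactly rows $i_1,i_2$ and columns $j_1,j_2$. This follows because a Latin square has no repeated symbol within a row or column. Everything else is bookkeeping with $0/1$ frequencies.
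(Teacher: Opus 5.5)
Your proposal is correct and follows essentially the same route as the paper: both directions work on the $n\times 2$ submatrix of a column pair and exploit that each symbol occurs once per column. Your forward direction is the mirror image of the paper's (you set $q:=A_{i_1,j_2}$ and use uniqueness of $p$ in column $j_2$, whereas the paper sets $q:=A_{i_2,j_1}$ and uses uniqueness of $p$ in column $j_1$), and you additionally make explicit both the reduction to Latin squares via \cref{lem:family-of-permutations-plus-reflection-symmetric-implies-urs} and why the intercalate in the converse must sit on rows $i_1,i_2$ and columns $j_1,j_2$, which the paper leaves implicit.
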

\begin{proof}
	Denote by $A_{\cdot,\{j_1,j_2\}}$ the $n\times2$ submatrix of $A$ spanned by the columns~$j_1\neq j_2$.
	Pairwise reflection-symmetry of $A$ applies precisely when we have for all $j_1\neq j_2$ and all $p\neq q$ that
	\begin{equation}
		(p,q)\text{ is a row of }A_{\cdot,\{j_1,j_2\}}\text{ iff }(q,p)\text{ is a row of }A_{\cdot,\{j_1,j_2\}}.\label{eq:pairwise-reflection-symbol-in-unit-latin-squares}
	\end{equation}
	Let us show first that~\eqref{eq:pairwise-reflection-symbol-in-unit-latin-squares} implies $2$-subsquare completeness.
	Take two distinct cells $(i_1,j_1)$ and $(i_2,j_2)$ carrying the same symbol~$p=A_{i_1,j_1}=A_{i_2,j_2}$.
	We argue that the four index pairs $(i_1,j_1)$, $(i_1,j_2)$, $(i_2,j_1)$, and $(i_2,j_2)$ span an intercalate; for this it remains to show that $A_{i_2,j_1}=A_{i_1,j_2}=q$, for some $q\in\range{n}$.
	If $A_{i_2,j_1}=q$ (meaning that the $i_2$-th row of $A_{\cdot,\{j_1,j_2\}}$ reads $(q,p)$) then by pairwise reflection-symmetry also the row~$(p,q)$ must be present in $A_{\cdot,\{j_1,j_2\}}$.
	However, there is just one row starting with $p$ in $A_{\cdot,\{j_1,j_2\}}$ and this row is the one with index~$i_1$.
	Consequently, $A_{i_2,j_1}=A_{i_1,j_2}=q$.

	For the converse proof direction consider any row of $A_{\cdot,\{j_1,j_2\}}$, say indexed $i_1$ and reading $(p,q)$.
	Let $i_2$ be the row-index in which $p$ is hosted in column~$j_2$.
	Then $(i_1,j_1)$ and $(i_2,j_2)$ carry the same symbol~$p$, and there must be an intercalate (necessarily $\{(i_1,j_1),(i_1,j_2),(i_2,j_1),(i_2,j_2)\}$) inducing a $2\times 2$ Latin square; this means that the row~$i_2$ reads $(q,p)$.
\end{proof}
\begin{proposition}\label{pro:urs-n-one-iff-power-of-two}
	An $\URS{n}{1}$ exists iff $n=2^k$ for some $k\in\mathbb{N}\cup\{0\}$.
	This means $\kURS{2^k}=1$.
\end{proposition}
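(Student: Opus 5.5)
The plan is to use \cref{lem:two-subsquare-complete-iff-pairwise-reflection-symmetric} to replace the question by one about $2$-subsquare complete Latin squares. The target is to show that these are exactly, up to isotopy, the Cayley tables of the elementary abelian $2$-groups $\mathbb{Z}_2^k$. Since $\card{\mathbb{Z}_2^k}=2^k$, this gives both directions, and $\kURS{2^k}=1$ then follows because $1$ is the smallest odd multiplicity.

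\emph{Sufficiency.} Let $A$ be the Cayley table of $(\mathbb{Z}_2^k,+)$, so $A_{g,h}=g+h$. Fix columns $h_1\neq h_2$. Row $g$ of $A_{\cdot,\{h_1,h_2\}}$ reads $(g+h_1,g+h_2)$. The row indexed by $g':=g+h_1+h_2$ reads $(g+h_2,g+h_1)$, because every element is its own inverse. Hence every row $(p,q)$ is matched by a row $(q,p)$, which is condition~\eqref{eq:pairwise-reflection-symbol-in-unit-latin-squares}. Since every row is a permutation, \cref{lem:family-of-permutations-plus-reflection-symmetric-implies-urs} gives $A\in\URS{2^k}{1}$. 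One could equally check $2$-subsquare completeness directly.

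\emph{Necessity.} Let $A\in\URS{n}{1}$. Row and column permutations and symbol relabelings preserve the intercalate structure, and so they preserve $2$-subsquare completeness. We may therefore assume $A$ is reduced in both directions, i.e.\ $A_{i,1}=i$ and $A_{1,j}=j$. Write $i\circ j:=A_{i,j}$; this is a loop with identity $e=1$. We read the defining property as follows: whenever $a\circ b=c\circ d$ with $(a,b)\neq(c,d)$, the cells $(a,b),(c,d)$ lie in a common intercalate, so $a\circ d=c\circ b$. Applying this with special partner cells gives the basic identities in order.
\begin{itemize}[label=$\cdot$]
\item The cells $(1,j)$ and $(j,1)$ both carry $j$, which yields $j\circ j=e$.
\item From $i\circ j=p$ and the cell $(p,1)$, which also carries $p$, we get $p\circ j=i$.
\item From $i\circ j=p$ and the cell $(1,p)$ we get $i\circ p=j$.
\end{itemize}
Chaining the last two rules gives commutativity: $i\circ j=p\Rightarrow p\circ j=i\Rightarrow p\circ i=j\Rightarrow j\circ i=p$. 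So $(\range{n},\circ)$ is a commutative loop of exponent $2$ in which $x\circ y=z$ forces every permutation of the roles of $x,y,z$ (a Steiner-type loop).

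The main obstacle is associativity, since the identities above hold in any Steiner loop, and those need not be groups. Here the full strength of $2$-subsquare completeness, for arbitrary cells and not only those meeting the first row or column, must be used. Fix $a,b,c$ and put $u:=a\circ b$ and $w:=u\circ c$. The rules give $w\circ c=u$, so the cells $(a,b)$ and $(w,c)$ carry the same symbol $u$. The intercalate property then yields $a\circ c=w\circ b$. Applying the permutation rules to this equation and using commutativity, one should rewrite it as $a\circ(b\circ c)=w=(a\circ b)\circ c$; this rewriting is the step I expect to need the most care. If this manipulation does not close directly, the fallback is to invoke the quadrangle criterion for Cayley tables of groups, which the intercalate property supplies with the right choice of cells.

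Once $(\range{n},\circ)$ is a group, it is a group in which every element satisfies $x\circ x=e$. Such a group is abelian, and it is a vector space over $\mathbb{F}_2$. Hence $n=2^k$, completing the proof.
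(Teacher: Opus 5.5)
Your route differs from the paper's. The paper's proof is essentially a citation. It takes the known classification of Hiner and Heinrich (every $2$-subsquare complete Latin square is the Cayley table of an elementary abelian $2$-group), transfers it to $\URS{n}{1}$ via \cref{lem:two-subsquare-complete-iff-pairwise-reflection-symmetric}, and reads off $n=2^k$ from \cref{cor:structure-elementary-abelian-2-group}. The existence direction for $n=2^k$ is left implicit in that citation.

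You instead reprove the classification. After normalizing to a loop, you get the Steiner-loop identities from intercalates through the first row and column, and then use one intercalate between arbitrary cells for associativity. You also verify sufficiency explicitly via the row involution $g\mapsto g+h_1+h_2$. Your version is self-contained and makes both directions explicit, at the price of length. The paper's is two lines but rests entirely on the external result.

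The one step you have not closed is associativity, and the rewriting you announce does not work as stated. Applying the permutation rules and commutativity to the single equation $a\circ c=w\circ b$ only yields $w=(a\circ c)\circ b$. That is just another reassociation, not $a\circ(b\circ c)$.

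The fix is to record what your argument actually proves, since $a,b,c$ were arbitrary: $(x\circ y)\circ z=(x\circ z)\circ y$ for all $x,y,z$. The case $(a,b)=(w,c)$, where the intercalate property cannot be invoked, occurs exactly when $b=c$ and is trivial. Applying this identity to the triple $(b,c,a)$ and using commutativity gives $a\circ(b\circ c)=(b\circ c)\circ a=(b\circ a)\circ c=(a\circ b)\circ c$. So no quadrangle-criterion fallback is needed, and with this addition your proof is complete.
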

\begin{proof}
	As found by~\cite{heinrich1981maximum,hiner1970subsquare} (see also~\cite[p.~34]{keedwell2015latin}), every $2$-subsquare complete Latin square is the Cayley table of an elementary abelian $2$-group; see \cref{def:elementary-abelian-p-group}.
	By \cref{lem:two-subsquare-complete-iff-pairwise-reflection-symmetric} this must consequently be the case for $\URS{n}{1}$, for which more specifically $n=2^k$ must apply according to the postponed \cref{cor:structure-elementary-abelian-2-group}.
\end{proof}
For convenience, and after recalling some concepts, we state as \cref{thm:structure-theorem-finitely-generated} the \emph{fundamental theorem for finitely generated abelian groups}, which can be found in most algebra textbooks; e.g., see~\cite[p.~158]{dummit2004abstract}.

Recall that an abelian group~$(G,+)$ is \emph{finitely generated} if there exist finitely many elements~$g_1,\ldots,g_m\in G$ such that every $g\in G$ can be written as an integer linear combination $g=\sum_{i=1}^m c_ig_i$ with $c_i\in\mathbb{Z}$.
For two abelian groups $(G_1,+_1)$, $(G_2,+_2)$ we denote by $G_1\oplus G_2$ their \emph{direct sum}, namely the Cartesian product $G_1\times G_2$ equipped with componentwise addition, and we mean by $G^k$ the $k$-fold direct sum of $G$ with itself.
The \emph{cyclic group of order~$d$} is $(\mathbb{Z}_d,+)$; recall that also $(\mathbb{Z}^r,+)$ is cyclic.
\begin{definition}[Elementary abelian $p$-group]\label{def:elementary-abelian-p-group}
	Let $p$ be a prime number.
	An abelian group~$(G,+)$ is an \emph{elementary abelian $p$-group} if every non-identity element has order exactly $p$, equivalently if the $p$-fold sum satisfies $pg =g+g+\cdots+g=0$ for every $g\in G$.
	The paramount example is $(\mathbb{Z}_p,+)$, and more generally we denote its $k$-fold direct sum by $\mathbb{Z}_p^k$.
\end{definition}
\begin{theorem}\label{thm:structure-theorem-finitely-generated}
	Let $G$ be a finitely generated abelian group.
	Then, a nonnegative integer $r$ and positive integers $d_1,d_2,\ldots,d_s\geq 2$ satisfying the divisibility chain $d_1\mid d_2\mid\cdots\mid d_s$ exist, such that
	\begin{equation*}
		G\cong\mathbb{Z}^r\oplus\mathbb{Z}_{d_1}\oplus\mathbb{Z}_{d_2}\oplus\cdots\oplus\mathbb{Z}_{d_s}.
	\end{equation*}
	The integer $r$ (free rank of $G$) and $d_1,\dots,d_s$ (invariant factors of $G$) are uniquely determined by $G$.\hfill$\square$
\end{theorem}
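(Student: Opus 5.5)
The statement is the fundamental theorem for finitely generated abelian groups, which is standard and is only recalled here, so I will not prove it from scratch. The cleanest route is the Smith normal form argument; the plan below follows it.

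First, I choose generators $g_1,\dots,g_m$ of $G$. This gives a surjective homomorphism $\varphi\colon\mathbb{Z}^m\to G$ sending $e_i$ to $g_i$, and hence $G\cong\mathbb{Z}^m/K$ with $K=\ker\varphi$. Next I show that $K$ is itself finitely generated, indeed free of rank at most $m$. This goes by induction on $m$: intersect $K$ with $\mathbb{Z}^{m-1}\times\{0\}$ and use that the image of $K$ under the last coordinate projection is an ideal $d\mathbb{Z}$ of $\mathbb{Z}$.

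Writing the generators of $K$ as the columns of an integer matrix $R$, I then bring $R$ into Smith normal form. I use invertible integer row and column operations, which correspond to a change of basis of $\mathbb{Z}^m$ and of the generating set of $K$. The core is a Euclidean-type induction: repeatedly move an entry of minimal nonzero absolute value to position $(1,1)$. Divide the other entries of its row and column by it and subtract, so that any nonzero remainder produces a strictly smaller entry. Once the first row and column are cleared, if some entry is not divisible by the $(1,1)$ entry, add its row to the first row and repeat. This terminates because the minimal absolute value strictly decreases. The result is a diagonal matrix $\mathrm{diag}(d_1,\dots,d_t,0,\dots)$ with $d_1\mid d_2\mid\cdots$. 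Discarding the units $d_i=1$ and setting $r=m-t$ gives $G\cong\mathbb{Z}^r\oplus\bigoplus\mathbb{Z}_{d_i}$. Ensuring the divisibility chain during the reduction is the most delicate step of the existence part.

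For uniqueness, I note that the torsion subgroup $T(G)$ is intrinsic and $G/T(G)\cong\mathbb{Z}^r$. The free rank $r$ is then determined, for instance via $\dim_{\mathbb{Q}}(G\otimes\mathbb{Q})$ or via $|G/2G|$ modulo the torsion contribution. For the invariant factors, I compare for each prime $p$ and each $k$ the orders $|p^kT/p^{k+1}T|$. These equal $p^{\#\{i:\,p^{k+1}\mid d_i\}}$, so they determine the multiset of $p$-power parts of the $d_i$. Together with the divisibility chain, this recovers the $d_i$ uniquely. This counting is the main technical point of the uniqueness part. In the paper one would simply cite~\cite[p.~158]{dummit2004abstract}, as is done.
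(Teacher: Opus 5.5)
Your sketch is a correct, standard proof. The paper proves nothing here: it only recalls the theorem and cites Dummit--Foote, where the result is the case $R=\mathbb{Z}$ of the structure theorem for finitely generated modules over a principal ideal domain. Your Smith-normal-form route relies on the division algorithm in $\mathbb{Z}$. It is therefore self-contained and even algorithmic, computing $r$ and the $d_i$ from any finite presentation. The cited module-theoretic proof is less explicit but works over every PID.

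One step of your uniqueness argument should be stated more carefully. The numbers $\#\{i:p^{k+1}\mid d_i\}$ determine only the exponents $v_p(d_i)\geq 1$, not the length $s$ of the chain, so you must invoke $d_1\geq 2$ here. For instance, $s=\max_p\log_p|T/pT|$, because some prime divides $d_1$ and hence every $d_i$. Equivalently, align the two chains from the largest factor downward and observe that a surplus factor would be forced to equal $1$.

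Finally, the paper only uses the existence part for finite groups, in \cref{cor:structure-elementary-abelian-2-group}. For that purpose a shorter argument suffices: a finite elementary abelian $2$-group is a finite-dimensional vector space over $\mathbb{F}_2$, and choosing a basis gives $G\cong\mathbb{Z}_2^k$ directly.
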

\begin{corollary}\label{cor:structure-elementary-abelian-2-group}
	Every finite elementary abelian $2$-group~$G$ is isomorphic to $\mathbb{Z}_2^k$ for some $k\in\mathbb{N}\cup\{0\}$; in particular, $\card{G}=2^k$.
	Conversely, for every such $k$ the group~$\mathbb{Z}_2^k$ is a finite elementary abelian $2$-group of order~$2^k$.
\end{corollary}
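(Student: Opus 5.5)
We derive the forward direction from \cref{thm:structure-theorem-finitely-generated}, and the converse by a direct check of the group axioms together with a one-line computation.

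Let $G$ be a finite elementary abelian $2$-group. Being finite, $G$ is finitely generated: take all of its elements as generators. Hence \cref{thm:structure-theorem-finitely-generated} gives
\begin{equation*}
G\cong\mathbb{Z}^r\oplus\mathbb{Z}_{d_1}\oplus\cdots\oplus\mathbb{Z}_{d_s}.
\end{equation*}
First, the free rank must vanish, $r=0$. Otherwise $G$ would contain an element of infinite order, for instance the image of $(1,0,\ldots,0)$. This contradicts finiteness, and equally the relation $2g=0$.

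Second, each invariant factor satisfies $d_i=2$. Since $d_i\geq2$, the component $\mathbb{Z}_{d_i}$ contains the element $1$, whose order is exactly $d_i$. Embed it as $(0,\ldots,0,1,0,\ldots,0)$ and transport it to $G$ via the isomorphism. Because isomorphisms preserve element orders and every non-identity element of $G$ has order $2$, we get $d_i=2$. Consequently $G\cong\mathbb{Z}_2^s$, and taking $k:=s$ gives $\card{G}=2^k$. The case $s=0$ is the trivial group, so that $k=0$ is allowed.

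For the converse, $\mathbb{Z}_2^k$ is abelian because its addition is componentwise in the abelian group $\mathbb{Z}_2$. It has order $2^k$ since it is a Cartesian product of $k$ two-element sets. For any $g=(g_1,\ldots,g_k)$ we have $2g=(2g_1,\ldots,2g_k)=0$, since $2x\equiv0\pmod 2$. Thus every non-identity element has order dividing $2$ and different from $1$, i.e., order exactly $2$.

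The only step requiring some care is the claim that an element of order $d_i$ exists inside the direct sum. This follows because the order of an element of a direct sum is the least common multiple of the orders of its components, so an element supported in a single component has the order of that component. No further obstacle is expected.
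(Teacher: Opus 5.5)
Your proof is correct and takes the paper's route: apply \cref{thm:structure-theorem-finitely-generated}, use finiteness to force $r=0$, then use $2g=0$ to force every $d_i=2$, and check the converse directly. The only difference is one of phrasing: you get $d_i=2$ from the order of the embedded generator $1\in\mathbb{Z}_{d_i}$, while the paper applies $x+x=0$ componentwise.
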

\begin{proof}
	Since $G$ is finite, \cref{thm:structure-theorem-finitely-generated} forces $r=0$ and $G\cong\mathbb{Z}_{d_1}\oplus\cdots\oplus\mathbb{Z}_{d_s}$.
	Being an elementary abelian $2$-group means $g+g=0$ for every $g\in G$; applied componentwise this forces $x+x=0$ for every $x\in\mathbb{Z}_{d_i}$ and every $i$, hence $d_i=2$ for all~$i$.
	Therefore $G\cong\mathbb{Z}_2^s$ with $\card{G}=2^s$.
	The converse direction is immediate from the definition.
\end{proof}

\subsection{A direct product construction}\label{sec:direct-product-construction}
Given two matrices,
$A \in \URS{n}{\lambda}$ of size~$\lambda n \times n$ and $\tilde{A}\in\URS{\tilde{n}}{\tilde{\lambda}}$ of size~$\tilde{\lambda}\tilde{n}\times\tilde{n}$, we consider the \emph{direct product} $A \boxtimes\tilde{A}\in (\range{n}\times\range{\tilde{n}})^{\lambda n\cdot\tilde{\lambda}\tilde{n}\times n\tilde{n}}$ with entries specified by
\begin{equation*}
	(A\boxtimes\tilde{A})_{(i,\tilde{i}),(j, \tilde{j})}:=\left(A_{i, j},\tilde{A}_{\tilde{i},\tilde{j}}\right),
\end{equation*}
where rows are indexed by pairs $(i,\tilde{i})\in\range{\lambda n}\times \range{\tilde{\lambda}\tilde{n}}$ and columns by pairs $(j, \tilde{j}) \in \range{n}\times\range{\tilde{n}}$, both assumed, e.g., in lexicographical order.
Each row of $A\boxtimes\tilde{A}$ corresponds to the permutation~$(\sigma,\tilde{\sigma}):(j,\tilde{j})\mapsto (\sigma(j),\tilde{\sigma}(\tilde{j}))$, where $\sigma$, $\tilde{\sigma}$ are the rows of $A$, $\tilde{A}$ at indices $i$, $\tilde{i}$ read as permutations.
\begin{theorem}\label{thm:urs-product}
	We have $A\boxtimes\tilde{A}\in\URS{n\tilde{n}}{\lambda\tilde{\lambda}}$, up to identification of the entries with $\range{n\tilde{n}}$-valued scalars.
\end{theorem}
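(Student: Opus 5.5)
The plan is to reduce everything to \cref{lem:family-of-permutations-plus-reflection-symmetric-implies-urs}. First I will check the dimensions and the permutation property. The matrix $A\boxtimes\tilde{A}$ has $\lambda n\cdot\tilde{\lambda}\tilde{n}=(\lambda\tilde{\lambda})(n\tilde{n})$ rows and $n\tilde{n}$ columns. Each row is the map $(j,\tilde{j})\mapsto(\sigma(j),\tilde{\sigma}(\tilde{j}))$. Since $\sigma\in\symmetricgroup{n}$ and $\tilde{\sigma}\in\symmetricgroup{\tilde{n}}$, this map is a bijection of $\range{n}\times\range{\tilde{n}}$, so after identifying $\range{n}\times\range{\tilde{n}}$ with $\range{n\tilde{n}}$ every row is a permutation. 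By \cref{lem:family-of-permutations-plus-reflection-symmetric-implies-urs}, it then suffices to prove pairwise reflection-symmetry.

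Next I will compute the bivariate frequencies directly. Fix distinct columns $(j_1,\tilde{j}_1)\neq(j_2,\tilde{j}_2)$ and distinct symbols $(p,\tilde{p})\neq(q,\tilde{q})$. The rows are indexed by pairs $(i,\tilde{i})$, and the conditions decouple into conditions on $i$ and on $\tilde{i}$. Hence the count factorizes:
\begin{equation*}
F^{((j_1,\tilde{j}_1),(j_2,\tilde{j}_2))}_{[(p,\tilde{p}),(q,\tilde{q})]}(A\boxtimes\tilde{A})=\card{\{i:A_{i,j_1}=p,A_{i,j_2}=q\}}\cdot\card{\{\tilde{i}:\tilde{A}_{\tilde{i},\tilde{j}_1}=\tilde{p},\tilde{A}_{\tilde{i},\tilde{j}_2}=\tilde{q}\}}.
\end{equation*}
I must show this product is invariant under swapping $(p,\tilde{p})\leftrightarrow(q,\tilde{q})$. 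I will do this by showing that each factor is separately invariant under $p\leftrightarrow q$ and $\tilde{p}\leftrightarrow\tilde{q}$ respectively.

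The main obstacle is the set of degenerate cases, where $j_1=j_2$ or $p=q$ in one coordinate. Definition~\ref{def:pairwise-reflection-symmetry} does not cover these directly, so I will treat them one at a time (the argument for the $\tilde{A}$ factor is symmetric).
\begin{itemize}
\item If $j_1\neq j_2$ and $p\neq q$, the factor is symmetric by the assumed pairwise reflection-symmetry of $A$.
\item If $p=q$, swapping changes nothing in this coordinate, so the factor is trivially invariant.
\item If $j_1=j_2$ and $p\neq q$, the count asks for $A_{i,j_1}=p$ and $A_{i,j_1}=q$ simultaneously. This is impossible, so the factor is $0$ both before and after the swap.
\end{itemize}
Together these cases show each factor, and hence the product, is symmetric. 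This gives reflection-symmetry on every column pair. Note that the case where $j_1=j_2$ and $p=q$ in the first coordinate forces $\tilde{j}_1\neq\tilde{j}_2$ and $\tilde{p}\neq\tilde{q}$ in the second, and the case analysis above already covers that situation.

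Finally, the uniform column frequency $\lambda\tilde{\lambda}$ follows from \cref{lem:family-of-permutations-plus-reflection-symmetric-implies-urs}. It can also be checked directly as the product $\lambda\cdot\tilde{\lambda}$ of the factor column frequencies.
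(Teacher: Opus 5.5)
Your proposal is correct and follows the same route as the paper's proof: rows are permutations of $\range{n}\times\range{\tilde{n}}$, the bivariate frequency factorizes into the product of the two factor frequencies, each factor is symmetric, and \cref{lem:family-of-permutations-plus-reflection-symmetric-implies-urs} then gives the column uniformity. Your explicit treatment of the degenerate cases ($j_1=j_2$ or $p=q$ in one coordinate), where the hypothesis on $A$ or $\tilde{A}$ does not apply directly, fills in a point that the paper's one-line equality chain silently assumes.
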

\begin{proof}
	By reading, as above, each row of $A \boxtimes \tilde{A}$ as the map~$(\sigma, \tilde{\sigma})$, we notice that $(j,\tilde{j})\mapsto (\sigma(j),\tilde{\sigma}(\tilde{j}))$ is indeed a permutation of $\range{n}\times\range{\tilde{n}}$:
	An arbitrary $(g,\tilde{g})\in\range{n}\times\range{\tilde{n}}$ is namely the image of $(\sigma^{-1}(g), \tilde{\sigma}^{-1}(\tilde{g}))$ subject to the mapping $(\sigma,\tilde{\sigma})$.

	To show pairwise reflection-symmetry, fix two columns~$(j_1,\tilde{j_1}),(j_2, \tilde{j_2})$ of $A\boxtimes\tilde{A}$ and two symbols~$(p, \tilde{p})$, $(q, \tilde{q})$.
	Write
	\begin{align*}
		\bifMatfree{(j_1,\tilde{j_1})}{(j_2,\tilde{j_2})}{(p,\tilde{p})}{(q,\tilde{q})} & =\card{\{(i,\tilde{i}):A_{i,j_1}=p,A_{i,j_2}=q\}\cap \{(i,\tilde{i}):\tilde{A}_{\tilde{i},\tilde{j_1}}=\tilde{p},\tilde{A}_{\tilde{i},\tilde{j_2}}=\tilde{q}\}}                                                                                                          \\
		                                                                                & =\bifMatfree{j_1}{j_2}{p}{q}\cdot\bifMatfree{\tilde{j_1}}{\tilde{j_2}}{\tilde{p}}{\tilde{q}}=\bifMatfree{j_1}{j_2}{q}{p}\cdot\bifMatfree{\tilde{j_1}}{\tilde{j_2}}{\tilde{q}}{\tilde{p}}=\bifMatfree{(j_1,\tilde{j_1})}{(j_2,\tilde{j_2})}{(q,\tilde{q})}{(p,\tilde{p})}
	\end{align*}
	showing symmetry of the bivariate frequency matrix.

	After bijectively relabeling all matrix entries with symbols from $\range{n\tilde{n}}$ and recalling that the matrix has a row-count of $\lambda\tilde{\lambda}n\tilde{n}$, we can apply \cref{lem:family-of-permutations-plus-reflection-symmetric-implies-urs} and obtain that each symbol~$q\in\range{n\tilde{n}}$ appears precisely $\lambda\tilde{\lambda}$ times in each column.
\end{proof}
Applying the direct product theorem to the upper bounds of \cref{thm:involutory-derangement-construction} and \cref{pro:urs-n-one-iff-power-of-two} yields infinite families of values for $n$ with a uniform upper bound for $\kURS{n}$ on them.
\begin{corollary}\label{cor:product-bound-infinite-family-constant-kurs}
	Let $N=2^{k+1}\cdot r$ with $k\geq 0$ and $r\geq 1$ odd.
	Then
	\begin{equation*}
		\kURS{N}\leq\sfact{2r-3},
	\end{equation*}
	with the convention $\sfact{-1}:=1$ for $r=1$.
	In particular, for any fixed odd $r$ the infinite sequence~$\{2^{k+1}r:k\geq0\}$ admits the uniform bound $\kURS{2^{k+1}r}\leq\sfact{2r-3}$, independent of $k$.
\end{corollary}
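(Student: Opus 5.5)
The plan is to realize $N$ as a direct product $2^k\cdot(2r)$ and apply \cref{thm:urs-product} to one factor from \cref{pro:urs-n-one-iff-power-of-two} and one from \cref{thm:involutory-derangement-construction}. Writing $N=2^{k+1}r=2^k\cdot(2r)$, the two factors are as follows.

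\begin{itemize}
\item Since $2^k$ is a power of two, \cref{pro:urs-n-one-iff-power-of-two} supplies $A\in\URS{2^k}{1}$. For $k=0$ this is the trivial $1\times1$ matrix, which is vacuously pairwise reflection-symmetric.
\item Since $2r\geq2$ is even, \cref{thm:involutory-derangement-construction} supplies $\tilde{A}\in\URS{2r}{\tilde{\lambda}}$ with $\tilde{\lambda}=\sfact{2r-3}$. For $r=1$ the convention $\sfact{-1}=1$ applies.
\end{itemize}

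By \cref{thm:urs-product}, after relabeling the entries we get $A\boxtimes\tilde{A}\in\URS{2^k\cdot 2r}{1\cdot\sfact{2r-3}}=\URS{N}{\sfact{2r-3}}$. The multiplicity $\sfact{2r-3}$ is a product of odd integers (or equals $1$), so it is odd. Hence it is an admissible candidate in the definition of $\kURS{N}$, which is the least odd $\lambda$ for which an $\URS{N}{\lambda}$ exists. Therefore $\kURS{N}\leq\sfact{2r-3}$.

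The uniform bound over the family $\{2^{k+1}r:k\geq0\}$ follows immediately, because the right-hand side depends only on $r$.

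The only potential obstacle is the degenerate case $k=0$, where \cref{thm:urs-product} is applied with the trivial factor $n=1$. Checking it directly, the product is just $\tilde{A}$ with relabeled symbols, so the bound is immediate in that case as well.
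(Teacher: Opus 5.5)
Your proposal is correct and follows the paper's proof essentially verbatim: factor $N=2^k\cdot 2r$, take an $\URS{2^k}{1}$ from \cref{pro:urs-n-one-iff-power-of-two} and an $\URS{2r}{\sfact{2r-3}}$ from \cref{thm:involutory-derangement-construction}, combine them via \cref{thm:urs-product}, and note that the resulting multiplicity is odd. Your explicit check of the degenerate factor $n=1$ when $k=0$ is a small piece of extra care that the paper leaves implicit.
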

\begin{proof}
	Write $N=n\tilde{n}$ with $n:=2^k$ and $\tilde{n}:=2r$.
	Since $n$ is a power of two, $\URS{n}{1}\neq\emptyset$ by \cref{pro:urs-n-one-iff-power-of-two}.
	Since $\tilde{n}$ is even, \cref{thm:involutory-derangement-construction} yields $\URS{\tilde{n}}{\tilde{\lambda}}\neq\emptyset$ for the odd value $\tilde{\lambda}:=\sfact{\tilde{n}-3}=\sfact{2r-3}$.
	Applying \cref{thm:urs-product} to representatives of these two sets gives a member of $\URS{N}{\tilde{\lambda}}$; as $\tilde{\lambda}$ is odd, $\kURS{N}\leq\tilde{\lambda}=\sfact{2r-3}$.
\end{proof}

\subsection{Counting pairwise reflection-symmetric matrices}\label{sec:counting-urs}
In the following we give some insights on the quantity of pairwise reflection-symmetric exemplars that certain $(n,\lambda)$-constellations admit.
For this purpose, we report results relying on the enumeration carried out by traversing the complete feasible region of a Quadratically Constrained Program (QCP) modeling the pairwise reflection-symmetric property; it is a binary QCP relying on a one-hot encoding of the symbol in each matrix entry which can be found in Appendix~\ref{sec:utilized-qcp}.
The QCP is generated in \texttt{Julia} (version 1.12.5) and solved using the Mixed Integer Programming solver Gurobi\footnote{\url{https://www.gurobi.com/} (accessed 2026-05-27)} (version 12.0.3) by setting \texttt{PoolSearchMode=2} to carry out the enumeration of all (optimal) feasible points; four threads are made available to Gurobi with a wallclock time limit of 16 hours, i.e., approximately $5.8\cdot10^4$ seconds.
The experiments have been run on a hardware equipped with an AMD Ryzen 9 5900X 12-Core Processor (3.80 GHz) and a maximum of 125.7 GiB of RAM; the results are reported in \cref{tab:computational-counting-results}.
\begin{table}[tb]
	\centering
	\caption{Reduced $\URS{n}{\lambda}$ enumeration.
		\emph{Time to first}:
		seconds until the first feasible representative is found, ``/'' if none exists (or none was encountered within the time limit).
		\emph{Enumeration time}:
		seconds for the full enumeration, or for the infeasibility certificate when no representative exists.
		\emph{Representatives}:
		number of reduced representatives.
		Entries in italic mark a hit of the time limit.\label{tab:computational-counting-results}}
	\setlength{\tabcolsep}{4pt}
	\begin{tabular}{l|rrrrrrrr}
		$\substack{n                                                                                                                                                                                                                                                                                                                                 \\ \lambda}$ & $\substack{4 \\ 1}$ & $\substack{6 \\ 1}$                      & $\substack{8 \\ 1}$                             & $\substack{10 \\ 1}$                            & $\substack{12 \\ 1}$ & $\substack{3 \\ 2}$ & $\substack{4 \\ 2}$                             & $\substack{5 \\ 2}$                             \\[3pt]\hline
		Time to first [s]    & 0.001 & /                                                 & 0.01                                                     & /                                                        & /     & 0.001 & 0.002                                                    & 0.005                                                    \\
		Enumeration time [s] & 0.001 & 0.002                                             & 0.04                                                     & 4.52                                                     & 0.27  & 0.001 & 0.004                                                    & 0.08                                                     \\
		Representatives      & 1     & 0                                                 & 30                                                       & 0                                                        & 0     & 1     & 6                                                        & 6                                                        \\[2.5ex]
		$\substack{n                                                                                                                                                                                                                                                                                                                                 \\ \lambda}$ & $\substack{6 \\ 2}$ & $\substack{7 \\ 2}$                      & $\substack{8 \\ 2}$                             & $\substack{9 \\ 2}$                             & $\substack{4 \\ 3}$  & $\substack{6 \\ 3}$ & $\substack{8 \\ 3}$                             & $\substack{10 \\ 3}$                            \\[3pt]\hline
		Time to first [s]    & 0.01  & 0.04                                              & 0.10                                                     & 47.01                                                    & 0.003 & 49.05 & 505                                                      & /                                                        \\
		Enumeration time [s] & 86.04 & 1.7\,\textperiodcentered{}\,10\textsuperscript{4} & \emph{5.8\,\textperiodcentered{}\,10\textsuperscript{4}} & \emph{5.8\,\textperiodcentered{}\,10\textsuperscript{4}} & 0.03  & 3597  & \emph{5.8\,\textperiodcentered{}\,10\textsuperscript{4}} & \emph{5.8\,\textperiodcentered{}\,10\textsuperscript{4}} \\
		Representatives      & 600   & 120                                               & \ensuremath{\geq}\,20710                                 & \ensuremath{\geq}\,24                                    & 21    & 1051  & \ensuremath{\geq}\,8783                                  & \ensuremath{\geq}\,0                                     \\
	\end{tabular}
\end{table}
The results are in line with the finding in \cref{pro:urs-n-one-iff-power-of-two} for $\lambda=1$ and $n\in\{4,6,\dots,12\}$, which can be solved in milliseconds up to a couple of seconds.
Interestingly, infeasibility of the QCP for $n=12$ is shown approximately 16 times faster than the one for the considerably smaller QCP for $n=10$.
Feasibility remains unanswered only for $(n,\lambda)=(10,3)$ while the exact counts could not be determined within the time limit for $\lambda\geq2$ and $n\geq8$.
\begin{remark}
	In passing, following up on \cref{obs:urs-is-strength-two},  we straightforwardly extended the enumeration algorithm to count pairwise reflection-symmetric matrices being a PSCA of strength $k$.
	With this adaptation we found that there is also a strength-$5$ PSCA in $\range{6}^{120\times 6}$ of this particular kind contributing the new pair $(n,k)=(6,5)$ to the list~\cite{iurlano2023growth} of such $(n,k)$-constellations; see Appendix~\ref{sec:certifying-instance}.
\end{remark}
\begin{remark}\label{rem:llm-usage-attribution}
	The pattern for $\lambda=2$ and prime numbers $n$ seems to support the hypothesis of a count of $(n-2)!$ solutions, which motivated an investigation by interacting with the Large Language Model (LLM) Opus 4.7\footnote{\url{https://claude.ai/} (accessed 2026-05-27)}\edef\claudefootnote{\the\value{footnote}} via the agent Claude\footnotemark[\claudefootnote].
	We provided a file with the enumerated solutions to the LLM and asked for pattern recognition.
	Also the code for the QCP was enclosed.
	After several interactions and requests for generalizing the insights, some claimed proofs were proposed by the LLM, which we could verify (modulo slight corrections needed) and which are reported in edited form in the remainder of this section.
	Notably, non-negligible time was spent to replace some intermediate results of the LLM's solution by suitable already known ones.
\end{remark}
We now give a group-theoretic construction of members of $\URS{n}{\lambda}$ and derive from it an explicit lower bound on $\card{\URSreduced{n}{2}}$, the number of reduced elements in $\URS{n}{2}$.
\begin{definition}[Matrix induced by a permutation group]\label{def:group-induced-matrix}
	For a subgroup~$G\leq\symmetricgroup{n}$ with $\card{G}=\lambda n$, let $\matrixrowsGroupInduced{G}\in\range{n}^{\lambda n\times n}$ denote the matrix whose rows are indexed by the elements~$g\in G$, with entries~$\bigl(\matrixrowsGroupInduced{G}\bigr)_{g,j}:=g(j)$.
	Since each $g$ is a bijection of $\range{n}$, every row of $\matrixrowsGroupInduced{G}$ is a permutation of $\range{n}$.
\end{definition}
\begin{definition}[(Generous) transitivity {\cite[p.~26]{godsil2001algebraic}}]\label{def:transitivity-generous-transitivity}
	The action of $G\leq\symmetricgroup{n}$ is \emph{transitive} if for every pair of points~$x,y\in\range{n}$ there exists $g\in G$ with $g(x)=y$.
	It is \emph{generously transitive} if for every pair of \emph{distinct} points~$x,y\in\range{n}$ there exists $g\in G$ with $g(x)=y$ \emph{and} $g(y)=x$, i.e., some element of $G$ interchanges $x$ and $y$.
	Generous transitivity implies transitivity.
\end{definition}
Note that for $\matrixrowsGroupInduced{G}\in\URS{n}{\lambda}$ the group~$G$ must clearly be transitive, already due to the presence of each symbol within each column.
This can be sharpened as follows.
\begin{theorem}\label{thm:generously-transitive-construction}
	Let $G\leq\symmetricgroup{n}$ be a subgroup of order~$\card{G}=\lambda n$.
	Then $\matrixrowsGroupInduced{G}\in\URS{n}{\lambda}$ iff $G$ is generously transitive.
\end{theorem}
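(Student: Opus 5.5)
Since every row of $\matrixrowsGroupInduced{G}$ is a permutation of $\range{n}$ and there are $\card{G}=\lambda n$ rows, \cref{lem:family-of-permutations-plus-reflection-symmetric-implies-urs} reduces the claim to: $\matrixrowsGroupInduced{G}$ is pairwise reflection-symmetric iff $G$ is generously transitive. Both directions will rest on an explicit formula for the bivariate frequencies,
\begin{equation*}
	\bifMatfree{j_1}{j_2}{p}{q}=\card{\{g\in G: g(j_1)=p,\ g(j_2)=q\}}.
\end{equation*}
If some $g_0\in G$ satisfies $g_0(j_1)=p$ and $g_0(j_2)=q$, then this set is the coset $g_0H$ of the pointwise stabilizer $H:=G_{j_1}\cap G_{j_2}=\{h\in G: h(j_1)=j_1,\ h(j_2)=j_2\}$. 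Hence every entry is either $0$ or $\card{H}$, and $\card{H}$ depends only on the column pair $(j_1,j_2)$, not on the symbols.

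For ``$\Leftarrow$'', fix $j_1\neq j_2$ and $p\neq q$ with $\bifMatfree{j_1}{j_2}{p}{q}>0$, witnessed by $g_0$. I will produce an element realizing $(q,p)$. Generous transitivity gives $s\in G$ swapping $p$ and $q$, so $s g_0$ maps $j_1\mapsto q$ and $j_2\mapsto p$, and thus $\bifMatfree{j_1}{j_2}{q}{p}>0$. By the coset observation both entries then equal $\card{H}$. Running the same argument with the roles of $(p,q)$ and $(q,p)$ exchanged handles the zero case, so the bivariate frequency matrix is symmetric.

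For ``$\Rightarrow$'', take distinct $x,y\in\range{n}$. By transitivity there is $g\in G$ with $g(1)=x$; transitivity itself follows from the uniform column frequencies, as the paper notes. Choose any column $j\neq 1$ and set $y':=g(j)$. I want an element swapping $x$ and $y$, and the natural column pair is $(x,y)$ itself. Transitivity gives $g\in G$ with $g(x)=y$, so $\bifMatfree{x}{y}{y}{q}>0$ with $q:=g(y)\neq y$. Symmetry only yields $\bifMatfree{x}{y}{q}{y}>0$, which is not yet a swap.

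Better to use the identity row. Since $\identity\in G$, we have $\bifMatfree{x}{y}{x}{y}\geq1$, and symmetry then gives $\bifMatfree{x}{y}{y}{x}\geq1$. So some $g\in G$ has $g(x)=y$ and $g(y)=x$, which is exactly generous transitivity. This is clean and holds for any pair $x\neq y$.

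The step I expect to need the most care is the coset argument in ``$\Leftarrow$''. Generous transitivity only supplies one swapping element, so equality of the counts, and not merely of their supports, must come from the fact that all nonempty fibres are cosets of the same two-point stabilizer. Everything else is routine.
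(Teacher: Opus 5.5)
Your proof is correct and follows the paper's route: the reduction via \cref{lem:family-of-permutations-plus-reflection-symmetric-implies-urs}, the observation that each nonempty fibre $\{g\in G: g(j_1)=p,\,g(j_2)=q\}$ is a coset of the two-point stabilizer, and then the equivalence of this orbit symmetry with generous transitivity. The only difference is that you prove this last equivalence directly, by composing with a swapping element for ``$\Leftarrow$'' and by reading off the identity row for ``$\Rightarrow$'' (which needs no separate transitivity step), whereas the paper invokes the orbital criterion of Godsil--Royle (Lemma~2.4.2); you can delete the abandoned first attempt in your ``$\Rightarrow$'' paragraph.
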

\begin{proof}
	As every row of $\matrixrowsGroupInduced{G}$ is a permutation of $\range{n}$, membership in $\URS{n}{\lambda}$ is equivalent to pairwise reflection-symmetry by \cref{lem:family-of-permutations-plus-reflection-symmetric-implies-urs}.

	To show the latter, fix positions~$j_1\neq j_2$ and symbols~$p\neq q$.
	The set~$\{g\in G:g(j_1)=p,g(j_2)=q\}$ is either empty or, if some $g_0$ lies in it, equal to the left coset~$g_0G_{(j_1,j_2)}$ of the pointwise stabilizer $G_{(j_1,j_2)}:=\{g\in G:g(j_1)=j_1,g(j_2)=j_2\}\leq G$; in particular it is nonempty exactly when $(p,q)\in\OrbDblArg{j_1}{j_2}:=\{(g(j_1),g(j_2)):g\in G\}$.
	Hence
	\begin{equation}
		\bifMatfree{j_1}{j_2}{p}{q}=
		\begin{cases}
			\card{G_{(j_1,j_2)}} & \text{if }(p,q)\in\OrbDblArg{j_1}{j_2}, \\
			0                    & \text{otherwise.}
		\end{cases}\label{eq:bivariate-frequency-formula}
	\end{equation}
	Since $\card{G_{(j_1,j_2)}}$ depends only on the position pair, \eqref{eq:bivariate-frequency-formula} shows that, for fixed $j_1\neq j_2$, the matrix~$\bifMatfreeSymfree{j_1}{j_2}$ is symmetric precisely when, for all $p\neq q$,
	\begin{equation}
		(p,q)\in\OrbDblArg{j_1}{j_2}\quad\text{ iff }\quad(q,p)\in\OrbDblArg{j_1}{j_2}.\label{eq:orbit-symmetry}
	\end{equation}
	It remains to identify, across all position pairs, requirement~\eqref{eq:orbit-symmetry} with generous transitivity; this is the orbital criterion of~\cite[Lemma~2.4.2]{godsil2001algebraic}, which applies as soon as transitivity of $G$ is known to hold.
	The latter is a trivial necessity for both general transitivity and the condition of realizing an~$\URS{n}{\lambda}$.
\end{proof}
\begin{definition}\label{def:generalized-dihedral}
	Let $A$ be an abelian group and, for $c\in A$, let $\tau_c,\sigma_c\in S_A$ be
	\begin{equation*}
		\tau_c\colon a\mapsto a+c,\qquad\qquad \sigma_c\colon a\mapsto c-a.
	\end{equation*}
	Provided $A$ is not an elementary abelian $2$-group, let us regard $\Dih{A}$ as the set of permutations $\{\tau_c:c\in A\}\cup\{\sigma_c:c\in A\}$ endowed with function composition.
	It is a subgroup of $S_A$ that is isomorphic to the \emph{generalized dihedral group} associated to $A$ which is classically defined as a semidirect product~\cite[p.~178]{dummit2004abstract}.
	When $A=\mathbb{Z}_n$ with $n\geq3$ it recovers the ordinary \emph{dihedral group} $D_n\leq\symmetricgroup{n}$ of symmetries of the regular polygon on $n$ vertices.
\end{definition}

\begin{corollary}\label{cor:generalized-dihedral-is-urs}
	For every $n\geq3$ and every abelian group~$A$ of order~$n$ not being an elementary abelian $2$-group, we have~$\matrixrowsGroupInduced{\Dih{A}}\in\URS{n}{2}$.
\end{corollary}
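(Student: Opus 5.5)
The natural route is \cref{thm:generously-transitive-construction}. It reduces the claim to two facts: $\Dih{A}$ is a subgroup of $S_A$ of order exactly $2n$, and it acts generously transitively on $A$. After fixing a bijection $A\to\range{n}$, the matrix $\matrixrowsGroupInduced{\Dih{A}}$ then lies in $\URS{n}{2}$ with $\lambda=2$.

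\textbf{Step 1: subgroup and order.} Closure under composition follows from the identities
\begin{equation*}
\tau_c\tau_d=\tau_{c+d},\qquad \tau_c\sigma_d=\sigma_{c+d},\qquad \sigma_c\tau_d=\sigma_{c-d},\qquad \sigma_c\sigma_d=\tau_{c-d}.
\end{equation*}
Inverses are $\tau_c^{-1}=\tau_{-c}$ and $\sigma_c^{-1}=\sigma_c$, so the set is a finite subgroup; the paper already asserts this in \cref{def:generalized-dihedral}.

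The substantive point is the count $\card{\Dih{A}}=2n$. The $n$ translations are pairwise distinct, as are the $n$ maps $\sigma_c$, since $\sigma_c(0)=c$. It remains to rule out $\tau_c=\sigma_d$ for any $c,d$. Such an equality would mean $a+c=d-a$ for all $a$. Taking $a=0$ gives $c=d$, and then $2a=0$ for every $a\in A$. That says $A$ is an elementary abelian $2$-group, which is excluded by hypothesis. So the two halves are disjoint and $\card{\Dih{A}}=2n=\lambda n$ with $\lambda=2$.

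\textbf{Step 2: generous transitivity.} Given distinct $x,y\in A$, take $g=\sigma_{x+y}$. Then $g(x)=(x+y)-x=y$ and $g(y)=x$, so $g$ swaps $x$ and $y$. Hence $\Dih{A}$ is generously transitive, and \cref{thm:generously-transitive-construction} gives $\matrixrowsGroupInduced{\Dih{A}}\in\URS{n}{2}$.

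I expect no real obstacle. The only delicate step is the order count in Step 1, which is exactly where the hypothesis on $A$ is used: without it $\sigma_c=\tau_c$ for every $c$, the group would have order $n$, and the row count would be wrong. The assumption $n\geq3$ is then automatic, since the only abelian groups of order $1$ or $2$ are elementary abelian $2$-groups.
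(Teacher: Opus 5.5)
Your proposal is correct and follows essentially the same route as the paper: the reflection $\sigma_{x+y}$ gives generous transitivity, the hypothesis that $A$ is not an elementary abelian $2$-group gives order $2n$, and then \cref{thm:generously-transitive-construction} applies. You spell out the closure identities and the disjointness $\tau_c\neq\sigma_d$ explicitly, where the paper only remarks that inversion is a nontrivial automorphism.
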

\begin{proof}
	For any two points~$x,y\in A$ the reflection $a\mapsto x+y-a$ swaps $x$ and $y$, so $\Dih{A}$ is generously transitive.
	Since $A$ is not an elementary abelian $2$-group, inversion $a\mapsto-a$ is a nontrivial automorphism, so these reflections are involutions outside the rotation subgroup and $\Dih{A}$ has order~$2n$ in $\symmetricgroup{n}$.
	Thus \cref{thm:generously-transitive-construction} applies.
	Note that if $A$ would be an elementary abelian $2$-group, inversion would be trivial, the reflections would coincide with the rotations, and the natural action would collapse to the order-$n$ rotation subgroup, yielding only a member of $\URS{n}{1}$.
\end{proof}
Counting the embedded generalized dihedral groups turns \cref{cor:generalized-dihedral-is-urs} into an explicit lower bound on $\card{\URS{n}{2}}$.
\begin{proposition}\label{pro:dihedral-lower-bound-labeled}
	For $n\geq3$, let $L(n)$ be the number of \emph{labeled abelian groups of order~$n$ with a fixed identity}, i.e., the abelian group structures on $\range{n}$ having a prescribed element as identity (equivalently, the abelian Cayley tables of order~$n$).
	It equals $\sum_{\card{A}=n}(n-1)!/\card{\Aut{A}}$, the sum extending over all isomorphism types of abelian groups of order~$n$, and is enlisted as the sequence \href{https://oeis.org/A058162}{A058162}~\cite{oeis2026online}.
	Let $L_2(n)$ count the isomorphism types in the summation that correspond to elementary abelian $2$-groups, so $L_2(n)=0$ unless $n=2^k$, in which case~$L_2(n)=(n-1)!/\card{\GL{k}{2}}$ with $\card{\GL{k}{2}}=2^{\binom{k}{2}}\prod_{\ell=1}^k(2^\ell-1)$; e.g., see~\cite[p.~3]{rosenberg1973number}.
	Then, using $\URSreduced{n}{2}$ for the reduced elements in $\URS{n}{2}$, we have
	\begin{equation}
		\card{\URSreduced{n}{2}}\geq\card{\URSreduced{n}{2}\cap \{\matrixrowsGroupInduced{G}\}_{G\leq \symmetricgroup{n}}}\geq\begin{cases}L(n)-L_2(n) & \text{ if }n\in\{2^k\}_{k\in\mathbb{N},k\geq 2}, \\
             L(n)        & \text{ otherwise.}\end{cases}\label{eq:lower-bound-labeled}
	\end{equation}
\end{proposition}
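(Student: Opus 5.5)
The plan is to build an injective map from the labeled abelian group structures counted by $L(n)$, minus the elementary abelian $2$-group ones, into the set $\URSreduced{n}{2}\cap\{\matrixrowsGroupInduced{G}\}_{G\leq\symmetricgroup{n}}$. The first inequality in \eqref{eq:lower-bound-labeled} is trivial, so all the work is in the second.

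\textbf{Construction.} Fix the prescribed identity, say $1\in\range{n}$, and take an abelian group structure $(\range{n},+)$ with identity $1$ that is not elementary abelian $2$. By \cref{def:generalized-dihedral}, $\Dih{\range{n},+}$ is a subgroup of $\symmetricgroup{n}$. By \cref{cor:generalized-dihedral-is-urs} it has order $2n$ and $\matrixrowsGroupInduced{\Dih{\range{n},+}}\in\URS{n}{2}$. Next I sort its rows lexicographically. Row permutations preserve membership by \cref{obs:shuffling-invariance-pairwise-reflection-symmetry}. The identity $\tau_1$ lies in the group and is the lexicographically smallest permutation, so the first row becomes $(1,2,\ldots,n)$. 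Hence the sorted matrix is reduced in the sense of \cref{def:reduced-generalized-latin-rectangle}, and it is still of the form $\matrixrowsGroupInduced{G}$. Moreover, the reduced matrix determines the set $G=\Dih{\range{n},+}$ of its rows, so it suffices to show that the group structure can be recovered from the permutation set $G$. The count $L(n)=\sum_{\card{A}=n}(n-1)!/\card{\Aut{A}}$ comes from the orbit--stabilizer count of relabelings fixing the identity. Subtracting the $L_2(n)$ elementary abelian structures, which occur only when $n=2^k$, gives exactly the case split in \eqref{eq:lower-bound-labeled}.

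\textbf{Injectivity.} Here I would reconstruct $+$ from $G$. For each $c$, the elements of $G$ sending the identity $1$ to $c$ are exactly $\tau_c$ and $\sigma_c$. If the translation subgroup $T=\{\tau_c\}$ can be identified inside $G$ intrinsically, then $a+c=\tau_c(a)$, where $\tau_c$ is the unique element of $T$ with $\tau_c(1)=c$, and the operation is recovered.

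To identify $T$ I would use fixed points and cycle structure:
\begin{itemize}
\item every $\tau_c$ with $c\neq 1$ (that is, $c$ not the identity) is fixed-point-free;
\item every $\sigma_c$ is an involution whose fixed points are the solutions of $2x=c$;
\item $\tau_c$ is an involution only when $2c=0$.
\end{itemize}
Concretely, suppose two structures $+$ and $+'$ yield the same $G$. Then $\{\tau_c,\sigma_c\}=\{\tau'_c,\sigma'_c\}$ for every $c$. Since a non-elementary-abelian $A$ has an element $c$ with $2c\neq0$, the element $\tau_c$ has order greater than $2$ and cannot equal any $\sigma'_d$; hence $\tau_c=\tau'_c$ for every such $c$. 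These elements generate $T$ (the elements with $2c=0$ form a proper subgroup, so its complement generates $A$). The same reasoning gives the same conclusion for $T'$, so $T=T'$ and therefore $+={+'}$.

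\textbf{Main obstacle.} The delicate step is exactly this intrinsic recovery of the rotation subgroup $T$. Examples like $\Dih{\mathbb{Z}_4}$ show that $G$ can contain other regular subgroups, for instance a Klein four-group built from reflections. The argument must therefore rely on the order-greater-than-$2$ elements together with the generation fact, rather than on "the regular abelian index-$2$ subgroup". I would first verify the generation claim, namely that $\{c:2c\neq0\}$ generates $A$ whenever $A$ is not elementary abelian $2$, which holds because a group is never the union of a proper subgroup and nothing else. If that step turns out too weak, I would fall back on comparing the fixed-point sets of the involutions in $G$.
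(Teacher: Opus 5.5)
Your proposal is correct and follows essentially the paper's argument. Both identify the rotation subgroup through the elements of order greater than $2$, use that these generate all rotations, and then recover $+$ from the simply transitive rotation subgroup together with the fixed identity. The paper writes a rotation $t$ of order at most $2$ as $x^{-1}(xt)$ with $(xt)^2=x^2\neq\identity$; this is exactly the translate argument $h=(x+h)-x$ that properly justifies your ``complement of a proper subgroup generates'' claim, and it is a better justification than your closing phrase. Your fiber comparison $\{\tau_c,\sigma_c\}=\{\tau'_c,\sigma'_c\}$ and your explicit check that the lexicographically sorted matrix is reduced are harmless refinements of the paper's intrinsic description $T=\langle g:\ord{g}>2\rangle$.
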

\begin{proof}
	We argue directly with the $L(n)$ abelian group structures on $\range{n}$ that share the prescribed identity element $\identity\in S_n$, of which $L_2(n)$ have exponent~$2$.
	For such a structure $(\range{n},+_A)$, whose carrier set we denote by $\range{n}_{+_A}$ to express the dependency on $+_A$, not of exponent~$2$, the dihedral realization in \cref{def:generalized-dihedral} gives an order-$(2n)$ subgroup of $\symmetricgroup{n}$ with $\matrixrowsGroupInduced{\Dih{A}}\in\URS{n}{2}$ by \cref{cor:generalized-dihedral-is-urs}.

	We show that the operation $+_A$ is recoverable from the permutation group $\Dih{\range{n}_{+_A}}$ alone, so that $\range{n}_{+_A}\mapsto\Dih{\range{n}_{+_A}}$ is injective.
	We argue that the rotations form a subgroup that coincides with $T:=\langle g\in\Dih{\range{n}_{+_A}}:\ord{g}>2\rangle$:
	Every element of order greater than $2$ is a rotation, giving $T\subseteq\{\tau_c\}_{c\in\range{n}_{+_A}}$; conversely, as $\range{n}_{+_A}$ is not of exponent~$2$, the rotations of order greater than $2$ \emph{generate} all of $\{\tau_c\}_{c\in \range{n}_{+_A}}$, since a rotation $t$ with $\ord{t}\leq2$ equals $x^{-1}(xt)$ for a fixed rotation $x$ with $\ord{x}>2$.
	The latter product consists of factors both of order greater than $2$; for this note that $(xt)^2=x^2\neq\identity$.
	The operation $+_A$ is then uniquely determined by $T$ and the neutral element~$e\in\range{n}_{+_A}$:
	For $a,b\in\range{n}$ one has $a+_A b=\tau(b)$, where $\tau\in T$ is the unique rotation with $\tau(e)=a$; the rotations are exactly the regular representation of $\range{n}_{+_A}$, which is by definition simply transitive.
	Hence $\Dih{\range{n}_{+_A}}=\Dih{\range{n}_{+_{A'}}}$ forces $+_A=+_{A'}$.

	Distinct structures therefore give distinct subgroups of $\symmetricgroup{n}$, hence distinct reduced representatives in $\URS{n}{2}$.
	Finally, an exponent-$2$ structure has trivial inversion, so its reflections coincide with its rotations and $\matrixrowsGroupInduced{\Dih{A}}\in\URS{n}{1}$ instead; such structures occur precisely for $n=2^k$ and number $L_2(n)$.
	Thus $(\range{n},+_A)\mapsto\matrixrowsGroupInduced{\Dih{A}}$ yields $L(n)-L_2(n)$ distinct reduced members of $\URS{n}{2}$ when $n=2^k$ with $k\geq2$, and $L(n)$ otherwise, showing~\eqref{eq:lower-bound-labeled}.
\end{proof}

\section{Conclusion}\label{sec:conclusion}
We investigated which column multiplicities $\lambda$ admit pairwise reflection-symmetric generalized Latin rectangles of order~$n$, finding a benign even regime and a subtle odd one.
For even $\lambda$ existence is always guaranteed (\cref{cor:even-lambda-setting-is-simple}), and already the case~$\lambda=2$ is structurally rich:
Generalized dihedral groups yield explicit constructions and a lower bound~\eqref{eq:lower-bound-labeled} on the number of solutions; we conjecture that it is tight for odd $n$, seeming an interesting challenge to resolve.
The delicate quantity is the smallest \emph{odd} multiplicity~$\kURS{n}$ (for even $n$), whose behavior seems to be governed by how heavily the prime number decomposition of $n$ distinguishes from a pure power of two:
powers of two $n=2^k$ attain the optimum $\kURS{n}=1$ through the Cayley table of $(\mathbb{Z}_2^k,+)$; a direct product construction propagates relatively economically this construction's lightness to selected $n$.
As a by-product, we demonstrated that pairwise reflection-symmetry forms a natural generalization of $2$-subsquare completeness.

Several follow-up questions arise:
Is the superexponential bound $\sfact{n-3}$ of \cref{thm:involutory-derangement-construction} on $\kURS{n}$ tight, or does a merely exponential, polynomial, or perhaps even constant bound apply?
Does relaxing feasibility to an optimization variant, minimizing the summed column-pair deviations from pairwise reflection-symmetry, turn $\kURS{n}$ into an insightful measure of obstruction when no truly feasible object exists?
Do the existence guarantees of \cref{cor:even-lambda-setting-is-simple,thm:involutory-derangement-construction} survive the additional demand of pairwise distinct rows (ignoring trivially large values of $\lambda$, for which this is impossible)?
A further direction would be to approach the count of pairwise reflection-symmetric matrices by estimating their information-theoretical density and an examination of whether their manifold symmetries can be captured by a natural compression scheme.

\section*{Declaration on generative AI}
Opus 4.7 has been guided towards mathematical discovery where indicated; see \cref{rem:llm-usage-attribution} for further specifications.
Moreover, it has been used to check notation-consistency, spell-checking, grammar, and to affirm the absence of any potential errors on a superficial level (typos, small logical flaws in the presentation of the arguments) arising during the write-up.
After using these tools, the authors reviewed and edited the content as needed and take full responsibility for the publication's content.

\clearpage

\setcounter{section}{0}
\renewcommand{\thesection}{\Alph{section}}
\renewcommand{\thesubsection}{\thesection.\arabic{subsection}}
\renewcommand{\theHsection}{appendix.\Alph{section}}

\section{Appendix}
\subsection{The utilized QCP}\label{sec:utilized-qcp}
In what follows, we briefly explain how one can \emph{computationally enumerate} all $\URS{n}{\lambda}$ which are \emph{reduced} (\cref{def:reduced-generalized-latin-rectangle}) as the feasible region of a binary QCP.
For every triple $(i,j,s)\in\range{\lambda n}\times\range{n}\times\range{n}$ we introduce the decision variables $y_{i,j,s}\in\{0,1\}$ with the intended meaning $y_{i,j,s}=1$ iff $A_{ij}=s$.
The binary $y$ is decoded back into the matrix~$A$ via $A_{ij}=\sum_{s\in\range{n}}sy_{i,j,s}$.

\medskip

These variables will be subject to the following three types of constraints.

\medskip

\emph{Generalized Latin rectangle constraints.}
Each entry contains exactly one symbol, and every row is a permutation of $\range{n}$:
\begin{align}
	\sum_{s\in\range{n}} y_{i,j,s} & = 1 &  & \forall i\in\range{\lambda n},j\in\range{n},\label{eq:onehot-cell}     \\
	\sum_{j\in\range{n}} y_{i,j,s} & = 1 &  & \forall i\in\range{\lambda n},s\in\range{n}.\label{eq:onehot-row-perm}
\end{align}
Column-uniformity $\colfreqvec{j}=(\lambda)_{q\in\range{n}}$, for all $j\in\range{n}$, will be automatically enforced by the symmetry-breaking constraints.

\medskip

\emph{Pairwise reflection-symmetry.}
For every column pair $(j_1,j_2)$ with $j_1<j_2$ and every symbol pair $(p,q)\in\range{n}^2$ with $p<q$,
\begin{align}
	\sum_{i\in\range{\lambda n}} y_{i,j_1,p}y_{i,j_2,q}=\sum_{i\in\range{\lambda n}} y_{i,j_1,q}y_{i,j_2,p}.\label{eq:onehot-refl-sym}
\end{align}
(These quadratic constraints can technically be linearized via McCormick envelopes.)

\medskip

\emph{Symmetry breaking.}
We partition the rows into $n$ blocks of $\lambda$ consecutive rows; block $b\in\range{n}$ consists of rows~$(b-1)\lambda+1,\ldots,b\lambda$.
We fix the first column to be the block index by
\begin{align}
	y_{i,1,\lceil i/\lambda\rceil}=1\qquad\forall i\in\range{\lambda n},\label{eq:onehot-col-one}
\end{align}
require the rows within a block to be lexicographically non-decreasing on the remaining columns,
\begin{align}
	\sum_{s\in\range{n}} sy_{i,2,s}                              & \leq\sum_{s\in\range{n}} sy_{i+1,2,s},\notag                                                          \\
	\sum_{j=2}^{n-1}\sum_{s\in\range{n}} s(n+1)^{n-1-j}y_{i,j,s} & \leq\sum_{j=2}^{n-1}\sum_{s\in\range{n}} s(n+1)^{n-1-j}y_{i+1,j,s},\label{eq:onehot-lex-within-block}
\end{align}
for every $i=(b-1)\lambda+k$ with $b\in\range{n}$ and $k\in\range{\lambda-1}$, and finally fix the first row of $A$ to be the identity permutation and the leading row of the second block to start with $(2,1)$:
\begin{align}
	y_{1,j,j}=1\quad\forall j\in\range{n}, \qquad y_{\lambda+1,2,1}=1.\label{eq:onehot-sym-break}
\end{align}

\clearpage

\subsection{A certifying instance}\label{sec:certifying-instance}
\lstset{caption={Certifying an $\URS{6}{20}$ that simultaneously is a PSCA of strength $5$.}}
\lstdefinelanguage{JuliaMinimalistic}{
	keywords={using, @show, @assert, all, for, in, length},
	sensitive=false,
	comment=[l]{\#},
	morestring=[b]"}
\begin{lstlisting}[basicstyle=\ttfamily\selectfont\scriptsize, keywordstyle=\ttfamily\selectfont\bfseries\color{blue}, commentstyle=\ttfamily\selectfont\color{red!60!orange}, tabsize=3,frame=none, backgroundcolor = \color{white}, numbers=none,inputencoding=utf8, mathescape=false, language=JuliaMinimalistic, xleftmargin=0mm,showstringspaces=false, breaklines=true, columns=fullflexible]
#!/usr/bin/env julia

sol =	[
				[1,2,3,4,5,6], [2,1,4,3,5,6], [3,1,4,2,5,6], [4,1,2,5,3,6], [5,1,2,4,3,6], [6,1,2,4,3,5],
				[1,2,3,6,5,4], [2,1,5,3,4,6], [3,1,5,2,4,6], [4,1,3,5,2,6], [5,1,3,4,2,6], [6,1,3,5,2,4],
				[1,2,4,6,5,3], [2,1,6,3,5,4], [3,1,6,2,4,5], [4,1,6,2,3,5], [5,1,6,3,2,4], [6,1,4,2,5,3],
				[1,2,5,6,4,3], [2,1,6,4,5,3], [3,1,6,5,4,2], [4,1,6,5,3,2], [5,1,6,4,2,3], [6,1,5,3,4,2],
				[1,3,2,5,4,6], [2,3,1,5,6,4], [3,2,1,4,6,5], [4,2,1,5,6,3], [5,2,1,4,6,3], [6,2,3,1,4,5],
				[1,3,2,6,4,5], [2,3,4,6,5,1], [3,2,4,1,5,6], [4,2,3,1,6,5], [5,2,3,4,1,6], [6,2,3,5,4,1],
				[1,3,4,6,5,2], [2,3,5,1,4,6], [3,2,5,6,4,1], [4,2,3,5,6,1], [5,2,3,6,1,4], [6,2,4,1,5,3],
				[1,3,5,6,4,2], [2,3,6,4,1,5], [3,2,6,5,1,4], [4,2,6,5,1,3], [5,2,6,4,1,3], [6,2,5,1,4,3],
				[1,4,2,6,3,5], [2,4,1,3,6,5], [3,4,1,2,6,5], [4,3,1,5,6,2], [5,3,1,4,6,2], [6,3,2,1,5,4],
				[1,4,3,6,2,5], [2,4,5,3,1,6], [3,4,5,1,6,2], [4,3,2,5,1,6], [5,3,2,1,6,4], [6,3,2,4,5,1],
				[1,4,5,3,2,6], [2,4,5,6,1,3], [3,4,5,2,6,1], [4,3,2,6,1,5], [5,3,2,4,6,1], [6,3,4,1,5,2],
				[1,4,5,6,2,3], [2,4,6,3,1,5], [3,4,6,2,1,5], [4,3,6,5,1,2], [5,3,6,4,1,2], [6,3,5,1,4,2],
				[1,5,2,6,3,4], [2,5,1,3,6,4], [3,5,1,2,6,4], [4,5,1,2,6,3], [5,4,1,3,6,2], [6,4,2,1,3,5],
				[1,5,3,6,2,4], [2,5,4,1,6,3], [3,5,4,2,1,6], [4,5,2,1,3,6], [5,4,2,6,3,1], [6,4,3,1,2,5],
				[1,5,4,2,3,6], [2,5,4,3,6,1], [3,5,4,6,1,2], [4,5,3,6,2,1], [5,4,3,1,2,6], [6,4,5,1,3,2],
				[1,5,4,6,3,2], [2,5,6,3,1,4], [3,5,6,2,1,4], [4,5,6,3,1,2], [5,4,6,2,1,3], [6,4,5,2,3,1],
				[1,6,2,5,3,4], [2,6,1,3,4,5], [3,6,1,2,5,4], [4,6,1,3,2,5], [5,6,1,2,3,4], [6,5,2,1,3,4],
				[1,6,3,4,2,5], [2,6,1,5,4,3], [3,6,1,4,5,2], [4,6,1,5,2,3], [5,6,1,4,3,2], [6,5,3,1,2,4],
				[1,6,4,3,5,2], [2,6,4,3,5,1], [3,6,4,2,5,1], [4,6,2,5,3,1], [5,6,2,4,3,1], [6,5,4,1,2,3],
				[1,6,5,2,4,3], [2,6,5,3,4,1], [3,6,5,2,4,1], [4,6,3,5,2,1], [5,6,3,4,2,1], [6,5,4,3,2,1]
			]


using Combinatorics

n = length(sol[1]); L = length(sol)
pos(i, q) = findfirst(==(q), i)                  # position of symbol q in row i

# strength-(n-1) PSCA: every ordered (n-1)-subsequence occurs L/(n-1)! times
is_psca = all(count(r -> all(pos(r,t[m]) < pos(r,t[m+1]) for m in 1:n-2), sol)
						== div(L, factorial(n-1))          for t in permutations(1:n, n-1))

# pairwise reflection symmetry: each column pair is balanced
is_prs = all(count(r -> r[a]==p && r[c]==q, sol) == count(r -> r[a]==q && r[c]==p, sol)
						for (a,c) in combinations(1:n,2) for (p,q) in combinations(1:n,2))

# uniform columns: each symbol occurs L/n times per column; this is a redundant double-check
is_uniform = all(count(r -> r[j]==v, sol) == div(L,n) for j in 1:n for v in 1:n)

@show(is_psca, is_prs, is_uniform);

# Output:
# is_psca = true
# is_prs = true
# is_uniform = true
\end{lstlisting}
\end{document}